\documentclass[
nolineno, authorcolumns,a4paper,english,cleveref,autoref,thm-restate,nameinlink,colorlinks=true,linkcolor=NavyBlue,citecolor=NavyBlue,urlcolor=NavyBlue]{gd-lipics-v3}

\hideLIPIcs
\usepackage{amssymb,amsthm,amsmath}
\usepackage{thmtools}
\usepackage[dvipsnames]{xcolor}
\usepackage{xspace}
\usepackage{float}
\usepackage[linesnumbered, lined, boxed]{algorithm2e}

\usepackage{todonotes}
\usepackage{complexity}
\usepackage{apptools}

\newcommand{\restateref}[1]{\IfAppendix{\hyperref[#1]{$\star$}}{\hyperref[#1*]{$\star$}}}

\Crefname{ourclaim}{Claim}{Claims}

\let\emph\relax\DeclareTextFontCommand{\emph}{\color{NavyBlue}\em}
\SetArgSty{textrm}

\newcommand{\bpn}{\operatorname{bp}}
\newcommand{\rv}{\operatorname{rv}}

\newcommand{\floor}[1]{{\left\lfloor #1 \right\rfloor}}

\newcommand{\Oh}[1]{O\left(#1\right)}

\usepackage{standalone}
\title{How Close is a Tree to a Euclidean Minimum Spanning Tree?}

\author{Todor Anti\'{c}}{Department of Applied Mathematics, Faculty of Mathematics and Physics, Charles University, Prague, Czech Republic}{todor@kam.mff.cuni.cz}{orcid}{Supported by grant no. 23-04949X of the Czech Science Foundation (GA\v{C}R) and grant SVV–2025–260822 of GA UK.}

\author{Jiří Fiala}{Department of Applied Mathematics, Faculty of Mathematics and Physics, Charles University, Prague, Czech Republic}{fiala@kam.mff.cuni.cz}{https://orcid.org/0000-0002-8108-567X}{Supported by grant no. 25-16847S of the Czech Science Foundation (GAČR).}

\author{Jelena Gliši\'{c}}{Department of Applied Mathematics, Faculty of Mathematics and Physics, Charles University, Prague, Czech Republic}{glisic@kam.mff.cuni.cz}{https://orcid.org/0009-0002-0792-3070}{Supported by grant no. 23-04949X of the Czech Science Foundation (GA\v{C}R) and grant SVV–2025–260822 of GA UK.}

\author{Grzegorz Gutowski}
{Institute~of~Theoretical~Computer~Science, Faculty~of~Mathematics~and~Computer~Science, Jagiellonian~University, Krak{\'o}w, Poland \and \url{https://grzegorz.gutowscy.pl}}
{grzegorz.gutowski@uj.edu.pl}
{https://orcid.org/0000-0003-3313-1237}
{Partially supported by the National Science Centre, Poland,\\ grant no.~2023/51/B/ST6/02833.}

\author{\texorpdfstring{Konstanty\\ Junosza-Szaniawski}{Konstanty Junosza-Szaniawski}}
{Faculty of Mathematics and Information Science, Warsaw University of Technology, Warsaw, Poland}
{konstanty.szaniawski@pw.edu.pl}
{https://orcid.org/0000-0003-0352-8583}
{}

\author{Jan Kratochvíl}{Department of Applied Mathematics, Faculty of Mathematics and Physics, Charles University, Prague, Czech Republic}{honza@kam.mff.cuni.cz}{https://orcid.org/0000-0002-2620-6133}{Supported by grant no. 23-04949X of the Czech Science Foundation (GA\v{C}R).}

\author{Giuseppe Liotta}{Universit\`a degli Studi di Perugia, Perugia, Italy}{giuseppe.liotta@unipg.it}{https://orcid.org/0000-0002-2886-9694}{}

\author{Morteza Saghafian}{Algorithm and Complexity Group, TU Wien, Vienna, Austria}{msaghafi@ac.tuwien.ac.at}{https://orcid.org/0000-0002-4201-5775}{}

\author{Maria Saumell}{Department of Theoretical Computer Science, Faculty of Information Technology, Czech Technical University in Prague, Prague, Czech Republic}{maria.saumell@fit.cvut.cz}{https://orcid.org/0000-0002-4704-2609}{Supported by grant no. 23-04949X of the Czech Science Foundation (GA\v{C}R).}

\author{Krisztina Szilágyi}{Department of Theoretical Computer Science, Faculty of Information Technology, Czech Technical University in Prague, Prague, Czech Republic}{szilakri@fit.cvut.cz}{https://orcid.org/0000-0003-3570-0528}{Supported under the project Robotics and advanced industrial production (reg. no. CZ.02.01.01/00/22\_008/0004590) and CTU Global Postdoc Fellowship Program.}

\author{Pavel Valtr}{Department of Applied Mathematics, Faculty of Mathematics and Physics, Charles University, Prague, Czech Republic}{valtr@kam.mff.cuni.cz}{https://orcid.org/0000-0002-3102-4166}{Supported by grant no. 23-04949X of the Czech Science Foundation (GA\v{C}R).}

\authorrunning{T. Anti\'{c} et al.}
\Copyright{Todor Anti\'{c}, Jiří Fiala, Jelena Gliši\'{c}, Grzegorz Gutowski, Konstanty Junosza-Szaniawski, Jan Kratochvíl, Giuseppe Liotta, Morteza Saghafian, Maria Saumell, Krisztina Szilágyi, and Pavel Valtr}
\ccsdesc[500]{Human-centered computing~Graph drawings}

\acknowledgements{This work was started at the workshop HOMONOLO 2025 in Nová Louka. The authors thank the organizers and all other participants for a productive and inspiring atmosphere.
We also thank anonymous reviewers for their helpful comments.}

\keywords{Straight-line Drawing of Tree, Euclidean Minimum Spanning Tree, 
EMST-Drawability, Caterpillar, Graph Drawing Algorithm}

\category{Track 1, Long Paper}

\begin{document}

\maketitle

\begin{abstract}

Let $\Gamma$ be a straight-line crossing-free drawing of a tree $T$. A \emph{bad pair} in  $\Gamma$ is a pair of non-adjacent vertices of $T$ whose Euclidean  distance in $\Gamma$ is smaller than the length of the longest edge in the path connecting them in~$\Gamma$. When $\Gamma$ has no bad pairs, $\Gamma$ is a Euclidean Minimum Spanning Tree of its vertex set (or EMST-drawing for short). 
Deciding whether a tree of maximum degree at most six
admits an EMST-drawing is known to be \NP-hard. In contrast, we characterize those caterpillars that admit an EMST-drawing. The characterization gives rise to a linear-time algorithm that decides if a caterpillar admits an EMST-drawing, and in the affirmative case, computes such a drawing. For caterpillars of maximum degree six, we further present a linear-time algorithm to compute a crossing-free straight-line drawing with the minimum number of bad pairs. 
For $n$-vertex trees with maximum vertex degree $\Delta$, we prove the $\Delta^2n\log n$ upper bound on the minimum number of bad pairs.
In the special case of stars, we construct a drawing with the minimum number of bad pairs.

\subparagraph{Generative AI Declaration} Generative AI was not used in the preparation of this article.
\end{abstract}

\section{Introduction}\label{se:intro}

A \emph{proximity drawing} $\Gamma$ of a graph $G$ is a straight-line drawing of $G$ in the plane in which pairs of adjacent vertices  are represented ``relatively close'' to each other, while non-adjacent pairs are kept ``relatively far apart''.
Different types of proximity drawings arise from different definitions of ``closeness''.

Minimum weight triangulations and EMST-drawings are notable examples of these proximity drawings.
Specifically, an \emph{EMST-drawing} $\Gamma$ of a tree $T$ is a crossing-free straight-line drawing of $T$ such that $\Gamma$ is  a Euclidean minimum spanning tree of its vertex set.
The  reader is referred to~\cite{DBLP:reference/crc/Liotta13} for definitions of different types of proximity drawings and to~\cite{DBLP:conf/gd/HaaseKLL23,DBLP:conf/gd/HaaseKLL25,DBLP:journals/tcs/LenhartL23} for a limited list of recent results. 

Motivated by the observation that only restricted families of graphs admit proximity drawings under either  local or  global proximity rules~\cite{DBLP:reference/crc/Liotta13}, we initiate the study of graph drawings for which the number of violations of a given proximity rule is as small as possible. In this paper we focus on drawings of trees under the global proximity rule.

More formally, let $\Gamma$ be a straight-line crossing-free drawing of a tree $T$. A \emph{bad pair} of  $\Gamma$ is a pair of non-adjacent vertices of $T$ whose Euclidean  distance in $\Gamma$ is smaller than the length of the longest edge in the path connecting them in~$\Gamma$. We aim to compute a drawing of $T$ that minimizes the number of bad pairs. Observe that a drawing with no bad pairs is an EMST-drawing. When $T$ admits an EMST-drawing we say that $T$ is \emph{EMST-drawable}.

\subparagraph*{Related Results.}  
Monma and Suri~\cite{monma1992transitions} showed that every tree with maximum degree at most five is EMST-drawable and described a linear-time algorithm (in the real RAM model of computation) to construct an EMST-drawing of such trees. Their algorithm uses a grid of size $O\bigl(2^{n^2}\bigr) \times O\bigl(2^{n^2}\bigr)$, and they conjectured the existence of a tree $T$ with $n$ vertices for which any EMST-drawing requires area at least $c^n \times c^n$ for some constant $c>1$. This long-standing conjecture was later confirmed by Angelini et al.~\cite{DBLP:journals/comgeo/AngeliniBCFKS14} for trees of maximum degree five. In contrast, Frati and Kaufmann~\cite{DBLP:journals/comgeo/FratiK11} showed that trees of maximum degree at most four admit EMST-drawings of polynomial area.
Monma and Suri~\cite{monma1992transitions} also observed that any tree containing a vertex of degree greater than six does not admit an EMST-drawing. For trees of maximum degree six, Eades and Whitesides~\cite{EadesW1996algo} proved that deciding whether such a tree is EMST-drawable is \NP-hard.

\subparagraph*{Our Results.} 
 
The following time complexities assume the real RAM model of computation.
\begin{itemize}
    \item In contrast to the \NP-hardness result of Eades and Whitesides~\cite{EadesW1996algo}, in \cref{se:MST-caterpillars} we  characterize the caterpillars  that are EMST-drawable. 
    Our characterization gives rise to a linear-time algorithm
to decide whether a caterpillar $T$ is 
EMST-drawable, and in the affirmative case, computes an EMST-drawing of $T$.  
    
    \item Any EMST-drawable tree has maximum degree at most six~\cite{monma1992transitions}. For caterpillars of maximum degree six that are not EMST-drawable, we show in \cref{se:caterpillars-bp-minimization} how to compute, in linear time, a straight-line crossing-free drawing with minimum number of bad pairs.
    \item  \cref{se:beyond} studies crossing-free straight-line drawings of trees having maximum vertex degree $\Delta > 6$. We show how to compute drawings of stars with the minimum number of bad pairs and describe a linear-time algorithm that constructs a  drawing of any tree of maximum degree $\Delta$ with the 
    $\Delta^2n\log n$ upper bound on the number of bad pairs.
\end{itemize}

Finally, conclusions and open problems can be found in \cref{se:conclusion}. 

Statements whose full proofs can be found in the appendix are marked by a clickable~($\star$).

\section{Preliminaries}

A straight-line drawing $\Gamma$ of a graph $G=(V,E)$ assigns a distinct point $\Gamma(v)$ to every vertex $v\in V$.
For any two vertices $u,v \in V$, we denote by $d_\Gamma(u,v)$ the Euclidean distance between points $\Gamma(u)$ and $\Gamma(v)$.
We use EMST as an abbreviation for Euclidean Minimum Spanning Tree. A crossing in a straight-line drawing is a point of intersection of two edges that is an interior point of at least one of the edges.

We call \emph{planar drawing} a straight-line drawing without crossings.

\begin{definition}
    Let $\Gamma$ be a planar drawing of a tree $T=(V, E)$. We say that $u,v\in V$ form a \emph{non-edge} if $uv\not\in E$. We say that vertices $u$ and $v$ form a \emph{bad pair} if there is an edge $wz$ on the unique path from $u$ to $v$ in $T$ such that $d_\Gamma(u,v)< d_\Gamma(w,z)$. We call the edge $wz$ a \emph{long edge} for the bad pair $u$ and $v$. 
A pair of vertices that is not a bad pair is a \emph{good pair}.
\end{definition}

\begin{definition}
    The \emph{bad pair number} $\bpn(\Gamma)$ of a planar drawing $\Gamma$ is the number of bad pairs in~$\Gamma$.
    The \emph{bad pair number} $\bpn(T)$ of a tree $T$ is the smallest number of bad pairs in any planar drawing of $T$.
\end{definition}
In particular, if a tree $T$ has $\bpn(T)=0$, then $T$ is EMST-drawable.

\begin{definition}

An angle $\varepsilon_v>0$ around a leaf vertex $v$ with neighbor $u$ in an EMST-drawing of a tree is a \emph{safe angle} if rotating the edge $uv$ around $u$ by an angle $\varphi\le\varepsilon_v$ in either direction results in an EMST-drawing.
A \emph{strong EMST-drawing} of a tree is an EMST-drawing such that, if a vertex of degree at most 5 is adjacent to a leaf, then it has an incident leaf with a safe angle.   
\end{definition}

\begin{observation}[\cite{monma1992transitions}]\label{obs:MSTmax6}
    Any EMST-drawable tree has maximum vertex degree 6. 
\end{observation}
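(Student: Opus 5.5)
Let $v$ be a vertex of degree $d$ in an EMST-drawing $\Gamma$, with neighbours $u_1,\dots,u_d$ listed in angular order around $\Gamma(v)$. The plan is to show that any two consecutive neighbours $u_i,u_{i+1}$ subtend an angle of at least $60^\circ$ at $\Gamma(v)$, and that at most one such angle can equal exactly $60^\circ$. Since the $d$ angles sum to $360^\circ$, this gives $d\le 6$, and if $d=6$ then all six angles must equal $60^\circ$.

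\textbf{Step 1: the pair $u_i,u_j$ is good.} The vertices $u_i$ and $u_j$ are non-adjacent, because $T$ is a tree and $u_i,v,u_j$ would otherwise form a triangle. The path connecting them is $u_i v u_j$. Since $\Gamma$ has no bad pairs,
\[
d_\Gamma(u_i,u_j)\ge \max\{d_\Gamma(v,u_i),d_\Gamma(v,u_j)\}.
\]

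\textbf{Step 2: the angle bound.} Consider the triangle with vertices $\Gamma(v),\Gamma(u_i),\Gamma(u_j)$. By Step 1, the side opposite $\Gamma(v)$ is a longest side. In any triangle the largest angle lies opposite the longest side, so the angle $\theta$ at $\Gamma(v)$ is at least each of the other two angles. The three angles sum to $180^\circ$, so $\theta\ge 60^\circ$.

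This argument needs the triangle to be non-degenerate. Degenerate cases would have $\theta=0$, which is impossible because $\Gamma$ is planar: two edges $vu_i$ and $vu_j$ cannot overlap. Degenerate cases could also have $\theta=180^\circ$, which satisfies the bound trivially.

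\textbf{Step 3: conclusion.} Summing the $d$ consecutive angles gives $360^\circ\ge 60^\circ\cdot d$, hence $d\le 6$.

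The only delicate point is the boundary case $\theta=60^\circ$. For $d\le 6$ it needs no extra care, since the inequality is non-strict. This is why the bound is $6$ rather than $5$: six edges at exactly $60^\circ$ spacing with equal lengths form an equilateral configuration. In that configuration all the relevant inequalities hold with equality, so no pair is bad under the strict-inequality definition.
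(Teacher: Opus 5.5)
Your argument is correct, and it is the standard $60^\circ$-angle argument for Euclidean minimum spanning trees that underlies the cited Monma--Suri observation; the paper itself gives no proof, only the citation. One slip to delete: the claim in your opening plan that at most one of the angles can equal exactly $60^\circ$ is false (in the hexagonal drawing of $K_{1,6}$ all six angles equal $60^\circ$, as you note yourself at the end), but Steps 1--3 never use it.
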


In \cref{alg:monma-suri}, we describe the algorithm of Monma and Suri~\cite{monma1992transitions} for drawing a 5-regular tree (i.e. a tree where all non-leaf vertices have degree 5) as an EMST. As a consequence, any tree of maximum degree 5 is EMST-drawable: To draw any such tree as an EMST, one can complete it to a 5-regular tree by adding any missing edges, and remove these edges from the drawing produced by \cref{alg:monma-suri}.

The algorithm performs a breadth-first search to embed the vertices by levels, where level $i$ contains all vertices at distance $i$ from an arbitrarily selected root vertex. All edges between levels $i$ and $i+1$ have the same length, denoted by $l_i$, and the angle between two consecutive neighbors at level $i+1$ is $\frac{\pi}{3}+\theta_i$, where the length $l_i$ and the angle $\theta_i$ are determined recursively as described in \cref{alg:monma-suri}.

\begin{algorithm}[H]

\DontPrintSemicolon
\SetAlgoLined
\KwIn{a $5$-regular tree $T$ rooted at node $r$}

place the root $r$ at the origin\;
place the five neighbors of $r$ evenly spaced on a unit circle centered at $r$\;
$\theta_0 \leftarrow \frac{\pi}{15}$\;
$l_0 \leftarrow 1$\;

\ForEach{vertex $v$ on level $i \ge 1$ with parent $v_0$ and children $v_1,\dots,v_4$}{
    $\theta_i \leftarrow \frac{\theta_{i-1}}{15} = \frac{\pi}{15^{i+1}}$\;
    $l_i \leftarrow \frac{l_{i-1}\theta_{i-1}}{3} = \frac{\pi^i}{3^i\cdot15^{i(i+1)/2}}$\;
    
    place $v_1,\dots,v_4$ on a circle centered at $v$ of radius $l_i$ such that:\;
    
    $\angle v_1 v v_0 = \angle v_0 v v_4 = \frac{\pi}{2} - \frac{\theta_{i-1}}{10}$\;

    $\angle v_j v v_{j+1} = \frac{\pi}{3} + \theta_i$ for $j\in\{1,2,3\}$\;
}
\KwOut{a planar drawing $\Gamma(T)$ of $T$}
\caption{Monma-Suri Algorithm}
\label{alg:monma-suri}
\end{algorithm}
\noindent See \cref{fig:monma-suri} for a visual representation of the iterative step of \cref{alg:monma-suri}.

\begin{figure}
    \centering
    \includegraphics[scale=.75]{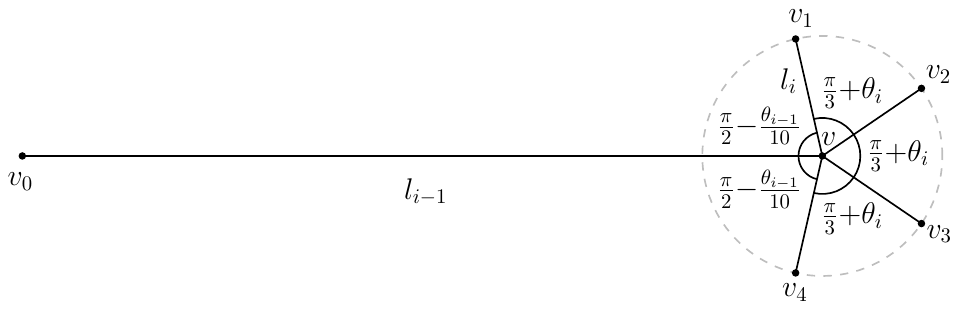}
    \caption{Drawing children $v_1,\dots,v_4$ of $v$; not drawn to scale.}
    \label{fig:monma-suri}
\end{figure}

\begin{theorem}[\cite{monma1992transitions}]\label{thm:MonmaSuri}
Let $T$ be a tree with maximum vertex degree five and $n$ vertices.  \cref{alg:monma-suri} computes an EMST-drawing of $T$ in $O(n)$ time in the real RAM model.
\end{theorem}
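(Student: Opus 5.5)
First I reduce to the $5$-regular case, exactly as described before \cref{alg:monma-suri}. Completing $T$ to a $5$-regular tree $T'$ adds at most $4$ leaves per vertex, so $|T'|=O(n)$. Restricting an EMST-drawing of $T'$ to $T$ gives an EMST-drawing of $T$. The reason is that $T$ is a subtree of $T'$, so the path between two vertices of $T$ is the same in both trees. Hence a bad pair of $T$ would also be a bad pair of $T'$. Planarity is likewise inherited. The running time is immediate: \cref{alg:monma-suri} is a BFS doing $O(1)$ real-RAM arithmetic per vertex (the quantities $\theta_i$ and $l_i$ are updated from the previous level). So all the work lies in showing that the output for a $5$-regular tree is a planar EMST-drawing.

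For the EMST property I use the standard characterization: a spanning tree is an EMST iff every pair $u,v$ satisfies $d(u,v)\ge$ the longest edge on their tree path. It suffices to check this along a chain of geometric invariants, proved by induction on the level, in the following order.

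\textbf{(i) Local angle condition.} At every vertex $v$, any two incident edges form an angle of at least $\pi/3$, strictly more for the child--child angles. At the root the angles are $2\pi/5$. At a vertex of level $i\ge 1$, the child--child angles are $\pi/3+\theta_i$. The parent--child angles are $\pi/2-\theta_{i-1}/10$, which exceeds $\pi/3$ because $\theta_{i-1}\le\pi/15$. The angles at $v$ also sum to $2\pi$: we have $3(\pi/3+\theta_i)+\pi-\theta_{i-1}/5=2\pi$ exactly, since $\theta_i=\theta_{i-1}/15$. Now take two edges $vx,vy$ with $|vx|\le|vy|$ and angle at least $\pi/3$ between them. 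Then $d(x,y)\ge|vy|$, so siblings and parent--child--grandchild triples form good pairs.

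\textbf{(ii) Geometric shrinking.} Since $l_i=l_{i-1}\theta_{i-1}/3$, the subtree below a level-$i$ vertex $v$ fits in a disk of radius $r_i=\sum_{j\ge i}l_j\le l_i(1+o(1))$. This radius is tiny compared with $l_{i-1}\theta_{i-1}$.

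\textbf{(iii) Wedge containment.} Take a child $w$ of $v$ at level $i+1$. The whole subtree $T_w$ lies in a thin cone around the ray $vw$, of half-angle $O(\theta_i)$ and hence smaller than $\theta_i/2$. This follows from (ii), since the subtree has diameter about $l_{i+1}\approx l_i\theta_i/3$ and sits at distance $l_i$ from $v$.

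Planarity follows from (iii) together with the angular separation in (i): the subtrees hanging off distinct edges at $v$ lie in disjoint cones.

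For a general pair $u,v$, let $z$ be the lowest common ancestor and $e$ the longest edge on the path. Since lengths decrease with depth, $e$ is one of the (at most two) edges of the path incident to $z$, or the edge from $z$ to its parent. I then compare $d(u,v)$ with the distance between the corresponding perturbed points. Replacing $u$ and $v$ by the children of $z$ on the path (or by $z$'s parent) moves each point by at most $r_{i+1}$. By (iii), the angle at $z$ changes by less than $\theta_i$, which the slack in (i) absorbs. This is the main obstacle. The edge from $z$ to its parent makes the angle $\pi/2-\theta_{i-1}/10$, and this edge has length $l_{i-1}\gg l_i$. So I must show that a descendant $u$ of $z$ is at distance at least $l_{i-1}$ from the parent $p$ of $z$ and from the other vertices near $p$. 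I plan to do this by placing $u$ in $z$'s cone and using the law of cosines with angle at least $\pi/2-\theta_{i-1}/10-\theta_i$. This gives $d(p,u)^2\ge l_{i-1}^2+l_i^2-2l_{i-1}l_i\sin(\theta_{i-1}/10+\theta_i)$, which is at least $l_{i-1}^2$ because $l_i\ge 2l_{i-1}(\theta_{i-1}/10+\theta_i)$. The inequality $l_i=l_{i-1}\theta_{i-1}/3\ge 2l_{i-1}\cdot\tfrac{16}{150}\theta_{i-1}$ holds, and this is exactly where the constants $1/3$ and $1/15$ are needed. The remaining cases, between deeper subtrees of different children, reduce to (i) by the same perturbation bound.
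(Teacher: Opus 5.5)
This statement is only quoted from Monma and Suri; the paper gives no proof. So your argument has to stand on its own. Your architecture is the right one: reduce to the $5$-regular case, use shrinking radii and cones, and split at the lowest common ancestor. But the content of this theorem lies in the constants. At each place where they matter, your argument is false, miscalculated, or only asserted, so as it stands there is a genuine gap.

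\textbf{The lemma in (i) is false for unequal legs.} If $|vx|<|vy|$ and the angle $\gamma$ is below $\pi/2$, then $d(x,y)\ge|vy|$ holds iff $|vx|\ge 2|vy|\cos\gamma$. For example, $\gamma=\pi/3$ with legs $1,2$ gives $\sqrt3<2$. So grandparent--grandchild pairs do not follow from angles at all; they need $l_i\ge 2l_{i-1}\sin(\theta_{i-1}/10)$.

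\textbf{The ancestor--descendant computation does not close as written.} You have $\theta_{i-1}/10+\theta_i=\theta_{i-1}/6$, not $\frac{16}{150}\theta_{i-1}$. Hence $2l_{i-1}(\theta_{i-1}/10+\theta_i)=l_i$ exactly, and there is no first-order slack. You also silently replaced $|zu|$ by $l_i$. In fact $|zu|$ can be as small as $l_i-r_{i+1}$, with $r_{i+1}\approx l_{i-1}\theta_{i-1}^2/135$, while $\sin x<x$ only gains $O(\theta_{i-1}^3 l_{i-1})$. To repair it, use the true cone half-angle $\arcsin(r_{i+1}/l_i)\approx\theta_i/3<\theta_i/2$; this, not ``$O(\theta_i)$'', is what justifies (iii). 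With half-angle $\theta_i/2$ you only need $|zu|\ge\frac{4}{15}\theta_{i-1}l_{i-1}$. Since $l_i=\frac13\theta_{i-1}l_{i-1}$, this leaves slack $\frac1{15}\theta_{i-1}l_{i-1}\gg r_{i+1}$.

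\textbf{The tight case is only asserted.} This is the case of $x,y$ in the subtrees of two different children $w_1,w_2$ of the lowest common ancestor $z$ at level $i$. Your remark in (ii) that the radii are tiny is wrong: $r_{i+1}\approx l_{i+1}=\frac{\theta_i}{3}l_i$ is of the same order as the angular slack $\theta_i$.

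So ``the angle at $z$ changes by less than $\theta_i$, which the slack absorbs'' does not suffice, because the legs $|zx|,|zy|$ can also shrink by about $\frac{\theta_i}{3}l_i$. If you bound angle and legs separately, a leftover angle $\frac\pi3+\frac{\theta_i}3$ with legs $l_i(1-\frac{\theta_i}3)$ gives only $d(x,y)^2\gtrsim l_i^2\bigl(1+(\frac{\sqrt3}{3}-\frac23)\theta_i\bigr)$, which is below $l_i^2$.

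What works is the coupled estimate $d(x,y)\ge d(w_1,w_2)-2r_{i+1}=2l_i\sin(\frac\pi6+\frac{\theta_i}2)-2r_{i+1}\approx l_i\bigl(1+(\frac{\sqrt3}2-\frac23)\theta_i\bigr)>l_i$. At the root this reads $2\sin(\pi/5)-2r_1\approx1.035>1$. The comparison $\frac{\sqrt3}{2}>\frac23$, together with the ancestor--descendant inequality, is where the factor $\frac13$ in $l_i=l_{i-1}\theta_{i-1}/3$ is really needed. It must be carried out explicitly for the proof to be complete.

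\textbf{Two smaller points.}
\begin{itemize}
\item The edge from the lowest common ancestor to its parent is never on the $u$--$v$ path. Your ``main obstacle'' is simply the ancestor--descendant case with the ancestor relabelled.
\item Your cone argument for planarity is only local. It is simpler to note that a crossing of edges $ab$ and $cd$ always forces a bad pair among $a,b,c,d$, so planarity follows from the EMST property.
\end{itemize}
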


Observe that the drawing produced by \cref{alg:monma-suri} is actually a strong EMST-drawing.

\begin{corollary}
    \label{cor:MonmaSuriStrong}
Let $T$ be a tree with maximum vertex degree five and $n$ vertices. \cref{alg:monma-suri} computes a strong EMST-drawing of $T$ in $O(n)$ time in the real RAM model.
\end{corollary}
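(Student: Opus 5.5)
We need to show that the drawing from \cref{alg:monma-suri} is a strong EMST-drawing. \cref{thm:MonmaSuri} already gives an EMST-drawing in linear time, so it remains to prove the strong property: every vertex $v$ of degree at most $5$ in $T$ that is adjacent to a leaf has an incident leaf with a safe angle.

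The plan is to use slack in the construction. First complete $T$ to a $5$-regular tree $T'$ as described before the algorithm, run the algorithm on $T'$, and delete the added vertices. Deleting leaves (and the subtrees hanging at added vertices) from an EMST-drawing keeps it an EMST-drawing, since paths between the remaining vertices are unchanged. So it suffices to show a robustness property of the full drawing of $T'$. The property is: every edge from a level-$i$ vertex $v$ to a child $v_j$ can be rotated around $v$ by a small angle $\varphi\le\varepsilon$, with $\varepsilon>0$ depending only on $i$, while the drawing of $T'$ minus the subtree of $v_j$, plus the rotated edge $vv_j$, stays an EMST-drawing. A leaf of $T$ adjacent to $v$ is either a child of $v$ in $T'$, or at the last level, so this covers it. The case where the leaf is the parent of $v$ is handled by choosing the root as a non-leaf vertex, or by the same argument applied to the root's neighbors.

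The key step is to re-examine the inequalities in Monma and Suri's correctness proof and check that they hold strictly with a margin. At $v$, consecutive children subtend angle $\frac{\pi}{3}+\theta_i$, and the two outer children make angle $\frac{\pi}{2}-\frac{\theta_{i-1}}{10}$ with the parent. Rotating $v_j$ by $\varphi<\theta_i/2$ keeps every angle at $v$ strictly above $\frac{\pi}{3}$. Since all edges at $v$ to its children have equal length $l_i$ and the edge to the parent is longer, the local empty-lune conditions at $v$ remain strict.

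For non-local pairs, Monma and Suri show that the subtree of a child at level $i+1$ lies in a thin cone of half-angle about $\theta_i/2$ around the edge direction, within distance $O(l_{i+1})$ of the child, and that $l_{i+1}\ll l_i\theta_i$. Rotating a leaf $v_j$ by $\varphi$ moves it by at most $l_i\varphi$. We choose $\varepsilon_v$ (say $\theta_i/100$) so that this displacement is smaller than the slack in each distance inequality. Every such inequality compares distances between subtrees around $v$ or its ancestors, and these are separated by angular gaps of order $\theta_i$ times scale $l_i$. Hence a perturbation of order $l_i\theta_i/100$ preserves every strict inequality, and the drawing stays planar.

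The main obstacle is the quantitative bookkeeping: verifying that every strict inequality in the Monma--Suri analysis carries a margin proportional to $l_i\theta_i$, not an arbitrarily small one, at the levels relevant to $v$. The pairs to check are the leaf $v_j$ against vertices in the sibling subtrees, against the parent's other subtrees, and against ancestors. If making all margins explicit becomes cumbersome, the fallback is a compactness argument. The drawing is a finite configuration in which every non-edge pair satisfies a strict inequality (Monma and Suri's proof gives strictness, because angles exceed $\frac{\pi}{3}$ strictly). All distances vary continuously with the rotation angle, so some positive $\varepsilon_v$ works. The running time is unaffected, since the drawing itself is unchanged.
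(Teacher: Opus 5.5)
Your proposal is correct and matches the paper, which gives no argument beyond the remark that the Monma--Suri drawing ``is actually'' strong. That remark implicitly relies on the strict slack you identify: all angles exceed $\frac{\pi}{3}$ strictly, so every inequality for pairs involving a leaf is strict. Your reduction to the completed tree $T'$ followed by the continuity argument makes this precise. One small slip: $l_{i+1}=l_i\theta_i/3$, so $l_{i+1}$ is only a constant factor smaller than $l_i\theta_i$, not $\ll l_i\theta_i$. This is harmless, since margins of order $l_i\theta_i$ are still all you need.
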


Throughout this paper, we often consider caterpillars. A \emph{caterpillar} is a tree such that, after removing all the leaves, the remaining graph is a path. We call this path the \emph{spine}. Note that, in particular, the first and last vertex of the spine (or any other spine vertex) are not vertices of degree one.
Let $T$ be a caterpillar whose spine consists of $t$ vertices $v_1,v_2,\dots,v_t$ appearing along the spine in this order. 

We write $T=(d_1,\dots,d_t)$ to denote the isomorphism type of $T$,
where $d_i=\deg_T(v_i)$ for each $i\in\{1,\dots,t\}$. Observe that each $d_i$ in any isomorphism type satisfies $d_i\ge 2$. 

We frequently use a subcaterpillar induced by the vertices of a subpath of the spine and all their neighbors, formally defined as follows:

\begin{definition}\label{def:subcaterpillar}
Let $T=(d_1,\dots,d_t)$ be a caterpillar with spine $v_1,\dots,v_t$. For $1\leq i\leq j\leq t$, we define $T_{ij}$ as the caterpillar $T_{ij}=(d_i,d_{i+1},\dots, d_j)$ with spine $v_i,v_{i+1},\dots,v_j$. 
We call $T_{ij}$ a \emph{spine-induced subcaterpillar} of $T$, abbreviated as \emph{spinar} from 
{\color{NavyBlue}sp}ine-{\color{NavyBlue}in}duced subcaterpill{\color{NavyBlue}ar}.
\end{definition}

In particular, if $i>1$ or $j<t$, then $v_{i-1}$ or $v_{j+1}$, respectively, is a leaf of $T_{ij}$. 

Eades and Whitesides~\cite{EadesW1996tcs,EadesW1996algo} showed that testing whether a tree is EMST-drawable is \NP-hard. When constructing the reduction gadgets, they made several observations about EMST-drawings of trees, some of which we use to prove our results.

\begin{lemma}[\cite{EadesW1996algo}]\label{lem:6uniq}
    Let $T$ be a degree-$6$ star, that is, $T=(6)$. Then, in any EMST-drawing of $T$, all edges have the same length, and consecutive edges in the cyclic ordering of edges around the center form angles of size $\frac{\pi}{3}$.
\end{lemma}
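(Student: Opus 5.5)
Let the center be $c$ with leaves $u_1,\dots,u_6$ in cyclic order around $c$, and set $\ell_i = d_\Gamma(c,u_i)$ and $\alpha_i = \angle u_i c u_{i+1}$ (indices mod $6$). Since $T$ is a star, every pair $u_i,u_j$ is a non-edge, and the path between them is $u_i c u_j$. So the EMST condition reads
\[
d_\Gamma(u_i,u_j) \ge \max(\ell_i,\ell_j) \quad \text{for all } i\ne j .
\]
The plan is to show that this forces every angle $\alpha_i$ to be at least $\pi/3$. Because the six angles around $c$ sum to $2\pi$, each must then equal exactly $\pi/3$. Finally, equality in the angle bound will be shown to force equal lengths.

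\textbf{Angle bound.} Consider the triangle $u_i c u_{i+1}$ and assume without loss of generality that $\ell_i \ge \ell_{i+1}$. By the law of sines, the side $u_iu_{i+1}$ is opposite the angle $\alpha_i$ and the side $cu_i$ is opposite the angle at $u_{i+1}$. The EMST condition gives $d_\Gamma(u_i,u_{i+1}) \ge \ell_i$.

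\begin{itemize}
\item The angle at $u_i$ is opposite the shortest side $cu_{i+1}$, so it is the smallest angle of the triangle.
\item The condition $d(u_i,u_{i+1}) \ge \ell_i$ makes $\alpha_i$ at least the angle at $u_{i+1}$.
\item Since $\alpha_i$ is the largest of three angles summing to $\pi$, we get $\alpha_i \ge \pi/3$.
\end{itemize}

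For a degenerate configuration with $\alpha_i = 0$, the near leaf would lie on the segment $cu_i$, which is impossible because it violates the distance condition. The same holds for collinear cases where $\alpha_i \ge \pi$; these are already handled by the sum argument. Summing over the six angles gives $2\pi = \sum_i \alpha_i \ge 6\cdot \pi/3 = 2\pi$, so all $\alpha_i = \pi/3$.

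\textbf{Equal lengths.} With $\alpha_i = \pi/3$, the angle at $u_i$ is at most the angle at $u_{i+1}$, and both are at most $\pi/3$. The three angles sum to $\pi$, so every angle equals $\pi/3$. The triangle is therefore equilateral, which gives $\ell_i = \ell_{i+1}$. Chaining this equality around the cycle shows all six edge lengths are equal.

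The main care point is handling the inequalities cleanly. First, "larger side opposite larger angle" must be used in the non-strict form. Second, the distance condition is non-strict, since a bad pair requires strict $<$. Third, the EMST condition is only used between consecutive leaves, which is enough. No earlier result is needed beyond the definitions.
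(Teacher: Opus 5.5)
The paper gives no proof of this lemma (it is quoted from Eades and Whitesides), so there is no in-paper argument to compare against. Your proof is correct and self-contained.

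It is exactly the equality case of the classical $\pi/3$ argument of Monma and Suri behind \cref{obs:MSTmax6}. In each triangle $u_icu_{i+1}$ the side $u_iu_{i+1}$ is the longest, so the angle at $c$ is the largest and hence at least $\pi/3$. Six such gaps summing to $2\pi$ force equality everywhere, and equality forces every triangle to be equilateral. What your route buys is independence from the external source, and it uses only consecutive leaf pairs. The citation buys only brevity.

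Three small points are worth tightening:
\begin{itemize}
\item Define $\alpha_i$ explicitly as the counterclockwise angular gap from ray $cu_i$ to ray $cu_{i+1}$. The unsigned angle $\angle u_icu_{i+1}$ lies in $[0,\pi]$, so those angles would not sum to $2\pi$ if some gap exceeds $\pi$. For such a gap the bound $\alpha_i\ge\pi/3$ is simply trivial.
\item Justify the step from ``EMST-drawing'' to ``no bad pairs'' by the exchange (cycle) property. If $u,v$ had a long edge $wz$ on their path, replacing $wz$ by $uv$ would give a strictly lighter spanning tree. The paper only states the converse direction explicitly.
\item What you actually use is monotonicity of angles in sides within a triangle, in its non-strict form, rather than the law of sines itself.
\end{itemize}
For the final step there is also a one-line alternative via the law of cosines. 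With $\alpha_i=\pi/3$ and $\ell_i\ge\ell_{i+1}$, the inequality $\ell_i^2\le d_\Gamma(u_i,u_{i+1})^2=\ell_i^2+\ell_{i+1}^2-\ell_i\ell_{i+1}$ gives $\ell_{i+1}\ge\ell_i$, hence equality.
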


\begin{lemma}[\cite{EadesW1996algo}]\label{lem:64grid}
    Let $T$ be a tree formed by attaching three new leaves to a leaf $v$ of a star $K_{1,6}$, that is, $T=(6,4)$. Then, in any EMST-drawing of $T$, $v$ is located at the center of a regular hexagon whose vertices are occupied by the center of the star, two leaves of the star, and the three new leaves attached to $v$.
    \end{lemma}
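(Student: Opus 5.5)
The plan is to use \cref{lem:6uniq} on the star formed by the degree-$6$ vertex $c$ and its six neighbours. In any EMST-drawing of $T=(6,4)$, the star $K_{1,6}$ centred at $c$ is a subtree. A subtree of an EMST is an EMST of its own vertex set: for a bad pair inside the subtree, the connecting path lies in the subtree, so it would already be a bad pair of the whole drawing. Hence \cref{lem:6uniq} applies. All six edges at $c$ have a common length $\ell$, and the neighbours of $c$ are the vertices of a regular hexagon $H_c$ centred at $c$. In particular $v$, a neighbour of $c$, has two hexagon-neighbours $a,b$ (leaves of the star) with $d(v,a)=d(v,b)=\ell$ and $d(c,a)=d(c,b)=\ell$. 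So $c,a,b$ lie on the circle of radius $\ell$ around $v$, at angular positions $0,\pm\pi/3$ relative to the direction $vc$.

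Next I analyse the three new leaves $x_1,x_2,x_3$ of $v$. Here $v$ has degree $4$ in $T$ with neighbours $c,x_1,x_2,x_3$, and I would derive the following constraints. \textbf{(i)} For each $x_i$ and each neighbour $u$ of $v$ other than $x_i$, the pair $(x_i,u)$ (and $(x_i,x_j)$) must be good, so $d(x_i,u)\ge\max(d(v,x_i),d(v,u))$. This forces the angle at $v$ between consecutive neighbours to be at least $\pi/3$. \textbf{(ii)} The pairs $(x_i,a)$ and $(x_i,b)$ have path $x_i v c a$, so $d(x_i,a)\ge\max(\ell,d(v,x_i))$, and similarly for the other hexagon vertices and for $c$. \textbf{(iii)} Conversely, the pairs $(a,v)$ and $(b,v)$ are already at distance exactly $\ell$, which is allowed.

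The key geometric step is to show that all of $c,a,b,x_1,x_2,x_3$ lie on rays from $v$ separated by angles of at least $\pi/3$, and that all their distances to $v$ equal $\ell$. Once this holds, the six rays at angles summing to $2\pi$ with each gap $\ge\pi/3$ must have every gap exactly $\pi/3$. Then the $x_i$ sit at angles $2\pi/3,\pi,4\pi/3$, giving the claimed regular hexagon around $v$.

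For the angular claim I apply the standard fact to the points $a,b,c,x_i$. If $p,q$ are two points with $d(p,q)\ge\max(d(v,p),d(v,q))$, then $\angle pvq\ge\pi/3$, since the largest side of the triangle $pvq$ is opposite the largest angle. For $p=x_i$ and $q\in\{a,b\}$ this inequality is exactly condition (ii) together with $d(v,a)=\ell$. For $p,q\in\{c,x_i,x_j\}$ it is condition (i), and $\angle avc=\angle cvb=\pi/3$ is given. So the six rays from $v$ to $c,a,b,x_1,x_2,x_3$ pairwise make angles $\ge\pi/3$, and six pairwise-separated directions with gaps summing to $2\pi$ have all gaps equal to $\pi/3$. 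This also forces $a$ and $b$ to be among the six hexagon directions.

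The main obstacle is the lengths: showing $d(v,x_i)=\ell$. With exact $\pi/3$ angles, the triangle inequality argument becomes tight. For $x$ adjacent (angularly) to $a$ at angle exactly $\pi/3$, the condition $d(x,a)\ge\max(\ell,d(v,x))$ forces the side $xa$ to be the largest in a triangle with angle $\pi/3$ at $v$. That happens only if the triangle is equilateral, so $d(v,x)=\ell$. This gives the two leaves next to $a$ and $b$. The middle leaf $x_2$, opposite $c$, is adjacent angularly to those two, which now have length $\ell$; since $(x_2,x_1)$ is a non-edge pair with path through $v$, the same equilateral argument yields $d(v,x_2)=\ell$. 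I would handle the degenerate equality cases carefully: "good" means $d\ge$ the longest edge, so ties are allowed and this is exactly what makes the triangles equilateral.
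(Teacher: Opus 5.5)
Your proof is correct and complete. The paper gives no proof of this lemma; it quotes it from Eades and Whitesides. So there is no in-paper argument to compare against, and what you wrote is a valid self-contained derivation from \cref{lem:6uniq}.

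All three ingredients check out:
\begin{itemize}
\item Restricting an EMST-drawing to the subtree $K_{1,6}$ gives an EMST-drawing, since tree paths inside a subtree are the same as in $T$. This fixes $c,a,b$.
\item For each pair $(x_i,u)$ with $u\in\{c,a,b,x_j\}$, the tree path consists of $vx_i$ (and $vx_j$) plus edges of length $\ell$. Goodness of the pair therefore gives $\angle x_ivu\ge\pi/3$. Together with the known angles among $c,a,b$, all fifteen pairwise angles at $v$ are at least $\pi/3$, so every cyclic gap is exactly $\pi/3$.
\item A triangle with angle $\pi/3$ at $v$ whose opposite side is a longest side is equilateral. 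This propagates the length $\ell$ from $a$ to $x_1$, from $b$ to $x_3$, and from $x_1$ to $x_2$.
\end{itemize}

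Two small details deserve a sentence in the write-up. First, in the angle fact, collinear $p,v,q$ with $v$ not between them is excluded by the hypothesis, because then $d(p,q)<\max(d(v,p),d(v,q))$. Second, a cyclic gap $g$ satisfies $g\ge\min(g,2\pi-g)=\angle pvq$, so a gap larger than $\pi$ causes no problem.

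Compared with citing the lemma, your argument exposes the mechanism. It never uses planarity. It only uses goodness of pairs whose tree path runs through $v$ with all other edges of length $\ell$. Exactly this situation recurs at the next degree-$4$ spine vertex, so the same step can be iterated along the spine. That is essentially what \cref{cor:644grid} requires, which the paper states as an easy corollary without proof.
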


In particular, \cref{lem:64grid} tells us that, in any EMST drawing of a caterpillar $T=(6,4)$, the vertices are placed on a triangular grid and edges are drawn as edges of the triangular grid. We will call such a drawing an \emph{embedding into a triangular grid}. See \cref{fig:644caterpillar} for an illustration. The following is an easy corollary of \cref{lem:64grid}.

\begin{corollary}\label{cor:644grid}
        Let $T=(6,4,\dots,4)$ be a caterpillar.\footnote{Whenever $4,\dots,4$ appears in a caterpillar type in this paper, it allows also one or no occurrences of 4.}
        Then any EMST-drawing of $T$ is an embedding into a triangular grid. Moreover, if $v$ is a spine vertex of $T$, then all grid points adjacent to $v$ are occupied by other vertices of $T$.
        
\end{corollary}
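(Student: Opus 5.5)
We argue by induction on the number $t$ of spine vertices of $T=(6,4,\dots,4)$, with spine $v_1,\dots,v_t$, where $\deg(v_1)=6$ and $\deg(v_i)=4$ for $2\le i\le t$. Fix an EMST-drawing $\Gamma$ of $T$. The base case $t=1$ is the star $(6)$. By \cref{lem:6uniq}, all six edges have equal length and consecutive edges form angles $\frac{\pi}{3}$. So after scaling and rotating, $v_1$ is a grid point, its six leaves are exactly its six grid neighbours, and all grid points adjacent to $v_1$ are occupied. The case $t=2$ is precisely \cref{lem:64grid}: $v_2$ is the centre of a regular hexagon whose corners are $v_1$, two leaves of $v_1$, and the three leaves of $v_2$. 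The edge lengths all equal the common edge length of the $v_1$-star, so this hexagon is exactly the set of grid neighbours of $v_2$ in the grid determined by the star at $v_1$.

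For the inductive step, suppose the claim holds for the spinar $T_{1,i}$, and consider $T_{1,i+1}$, with $i\ge 2$. We restrict $\Gamma$ to $T_{1,i+1}$. The key fact is that a subdrawing of an EMST-drawing induced by a subtree is again an EMST-drawing: any pair of vertices of the subtree that were good in $\Gamma$ stay good, since the tree path between them is the same. Now $v_{i+1}$ is a leaf of $T_{1,i}$, so by induction it sits on a grid point adjacent to $v_i$. To place the three leaves of $v_{i+1}$, we apply \cref{lem:64grid} locally. The spinar $T_{i,i+1}$ together with the vertices around $v_i$ looks like $(6,4)$ once we note that $v_i$ has all six grid neighbours occupied: its four tree neighbours plus two further vertices. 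These two extra vertices are not adjacent to $v_i$, so we cannot simply apply the lemma to a subtree.

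We therefore redo the argument of \cref{lem:64grid} directly rather than invoke it. Call the common grid edge length $1$. The vertex $v_{i+1}$ has three leaves $w_1,w_2,w_3$, and the edges $v_{i+1}w_j$ have length at least the distance to any non-adjacent vertex whose tree path goes through that edge. We use the EMST property in the form that, for every vertex $x\neq v_{i+1}$ of the drawing, we have $d_\Gamma(w_j,x)\ge d_\Gamma(w_j,v_{i+1})$. We also use the standard fact that two edges at a vertex in an EMST form an angle of at least $\frac{\pi}{3}$, with equality only if the lengths are equal.

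Around $v_{i+1}$ the occupied grid points are $v_i$ and the two common grid neighbours of $v_i$ and $v_{i+1}$; these are already occupied by induction. This leaves an angular window of exactly $\pi$ on the far side, measured between those two common neighbours. Three leaves, plus the neighbours on the boundary, force four angular gaps of at least $\frac{\pi}{3}$ inside an angle of $\pi+\frac{2\pi}{3}$ counted through the boundary vertices, so all the gaps are tight at $\frac{\pi}{3}$. Tightness forces every edge length to be $1$, which puts $w_1,w_2,w_3$ at the remaining three grid neighbours of $v_{i+1}$. Hence all grid points adjacent to $v_{i+1}$ are occupied. Leaves that are not spine vertices need no further statement, and edges are grid edges, which completes the induction.

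The main obstacle is this tightness step. We must justify carefully that the neighbours occupying the two boundary grid points contribute angle constraints at $v_{i+1}$ even though they are not adjacent to $v_{i+1}$. This requires showing, as in Eades–Whitesides, that $d_\Gamma(w_j,x)\ge |v_{i+1}w_j|$ together with $d_\Gamma(w_j,v_{i+1})\ge$ the length of the other edges forces the angle $\angle x v_{i+1} w_j\ge\frac{\pi}{3}$ whenever $|v_{i+1}x|=1\le |v_{i+1}w_j|$. The lengths must be handled in both directions, with each leaf edge at least and at most $1$, using the pairs $w_j,w_k$ and $w_j,x$, and the equality cases must be tracked.
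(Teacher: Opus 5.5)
\textbf{Verdict: right strategy, but the inductive step has a real gap as written.} You induct along the spine, use that the two common grid neighbours $a,b$ of $v_i$ and $v_{i+1}$ are already occupied, and rerun the hexagon-filling argument of \cref{lem:64grid} at $v_{i+1}$. That is the right approach. You are also right that the lemma cannot be invoked verbatim for $i\ge 2$, since $v_i$ has degree $4$; the paper just calls the statement an easy corollary of \cref{lem:64grid}. The step fails for two reasons.

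\textbf{The angle count is wrong.} Seen from $v_{i+1}$, the directions to $a$, $v_i$, $b$ are consecutive at angles $\frac{\pi}{3}$. So the far-side window from $b$ to $a$ is $2\pi-\frac{2\pi}{3}=\frac{4\pi}{3}$, not $\pi$. A window of $\pi$ could not hold four gaps of at least $\frac{\pi}{3}$ at all. Your other figure, $\pi+\frac{2\pi}{3}=\frac{5\pi}{3}$, leaves slack $\frac{\pi}{3}$, so tightness would not follow. You also assume without argument that the three leaves lie in the far window.

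\textbf{The key inequality is missing.} You leave it as ``requires showing'' and phrase it with the hypothesis $1\le|v_{i+1}w_j|$, which you cannot know in advance. An earlier sentence also states the EMST inequality in the reversed direction (edge length at least the distance).

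\textbf{The fix is short.}
\begin{itemize}
\item Take $x\in\{v_i,a,b\}$. The tree path from $w_j$ to $x$ uses $w_jv_{i+1}$ and then the spine edge $v_{i+1}v_i$, whose length is $1=|v_{i+1}x|$. Hence $d_\Gamma(w_j,x)\ge\max(|v_{i+1}w_j|,|v_{i+1}x|)$.
\item Likewise $d_\Gamma(w_j,w_k)\ge\max(|v_{i+1}w_j|,|v_{i+1}w_k|)$.
\item So in each such triangle the side opposite $v_{i+1}$ is a longest side. The angle at $v_{i+1}$ is therefore at least $\frac{\pi}{3}$, and by the law of cosines equality forces the triangle to be equilateral. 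No a priori bound on $|v_{i+1}w_j|$ is needed.
\item The six directions from $v_{i+1}$ to $v_i,a,b,w_1,w_2,w_3$ are then pairwise at angle at least $\frac{\pi}{3}$. Their six cyclic gaps sum to $2\pi$, so each is exactly $\frac{\pi}{3}$. This also rules out a leaf between $a$ and $v_i$ or between $v_i$ and $b$.
\item The tight gaps give equilateral triangles, which propagate the length $|v_{i+1}a|=1$ to every $|v_{i+1}w_j|$. Hence $w_1,w_2,w_3$ sit on the three free grid neighbours of $v_{i+1}$.
\end{itemize}
With this in place your induction is complete, including the statement that all grid points adjacent to $v_{i+1}$ are occupied.
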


\begin{figure}[h]
    \centering
    \includegraphics[page=3]{figures/caterpillardrawings.pdf}
    \caption{An embedding of $(6,4,4,4,4,4,4)$ into the triangular grid.}
 \label{fig:644caterpillar}
\end{figure}

\section{EMST-Drawable Caterpillars}\label{se:MST-caterpillars}

In this section we fully characterize caterpillars that are EMST-drawable in terms of their degree sequences. In particular, we describe a linear-time algorithm that decides whether a tree is EMST-drawable and, in the affirmative, outputs a strong EMST-drawing. We remark here that our decision algorithm is purely combinatorial, and runs in linear time even in the word RAM model of computation.

\begin{theorem}[Characterization of caterpillars that are EMST-drawable]\label{thm:char-6-EMST}
Let $T=(d_1,\dots, d_t)$ be a caterpillar with $n$ vertices. 
Then $T$ is EMST-drawable if and only if all of the following conditions are fulfilled:
\begin{itemize}
    \item For all $i\in \{1,2,\dots,t\}$, we have $ d_i\leq 6$.
    \item For all $i<j$ such that $d_i=d_j=6$, at least one of the following conditions holds:
    \begin{itemize}
        \item there exist $k,k'$ such that $i<k<k'<j$ 
        and $d_k=d_{k'}=3$, or
        \item there exists $k$ such that $i<k<j$ 
        and $d_k=2$.
    \end{itemize}
    \item For all $i<j$ such that $\{d_i,d_j\}=\{5,6\}$, there exists $k$ such that $i<k<j$ and $d_k\leq 3$.
\end{itemize}
Moreover, there is an algorithm that decides whether $T$ is EMST-drawable in $O(n)$ time  in the word RAM model. If $T$ is EMST-drawable, it outputs a strong EMST-drawing of $T$ in $O(n)$ time in the real RAM model.
\end{theorem}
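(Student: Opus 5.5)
We prove necessity and sufficiency separately, and then read the algorithm off the sufficiency construction.

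\textbf{Necessity.} The first condition is \cref{obs:MSTmax6}. For the second, take an EMST-drawing in which $d_i=d_j=6$, $i<j$, and every intermediate degree lies in $\{3,4,5,6\}$ with at most one $3$. Choosing $i,j$ as close as possible, we may assume that all intermediate degrees are $4$ or $5$ apart from at most one $3$.

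We first extend \cref{cor:644grid}. Suppose the spine from $v_i$ reaches $v_k$ through vertices of degree $\ge 4$. Then $v_k$ sits at a grid point whose previous spine neighbour and two further grid neighbours are already occupied. This leaves at most three free grid neighbours, all at the minimal edge length forced by the grid.

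\begin{itemize}
\item A vertex of degree $5$ would need four non-parent neighbours within an angle that is too narrow. The EMST angle condition requires consecutive edges at a vertex to form angles of at least $\pi/3$, strictly more when the lengths differ. This contradiction is what will give the third condition.
\item A vertex of degree $3$ breaks rigidity, but only partially. Its two children must still lie in the three free directions, at distance at least the grid length. The next spine vertex therefore has only limited angular freedom, and we claim that from there a single further vertex of degree $\ge 4$ cannot re-rigidify the chain.
\end{itemize}

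The key lemma to prove is an angle-propagation statement. Starting from a $6$-vertex, the angular slack at a degree-$4$ spine vertex stays $0$, a degree-$3$ vertex creates slack less than $\pi/3$, and reaching a $6$-vertex requires slack that only a $2$ or two $3$'s can supply. Applying this from both ends, the two rigid grid structures coming from $v_i$ and $v_j$ are incompatible. Similarly, a $6$-vertex and a $5$-vertex joined only by degrees $\ge 4$ give a contradiction, since the $5$-vertex cannot fit into three free grid directions without making a bad pair with an occupied grid point. This angle bookkeeping, with strict inequalities around the $3$-vertices, is the main obstacle. I would attack it by tracking, along the spine, the interval of possible directions of the edge $v_kv_{k+1}$ relative to the grid.

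\textbf{Sufficiency and algorithm.} Cut the spine at every vertex of degree $2$, and at one vertex of degree $3$ in each segment that contains two of them. This splits $T$ into spinars, each containing at most one vertex of degree $6$. A spinar containing a $6$ contains no $5$ unless a $3$ separates them.

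\begin{itemize}
\item For a spinar with a $6$, draw the rigid grid part explicitly.
\item Beyond the nearest $3$, switch to \cref{cor:MonmaSuriStrong}-style shrinking drawings of the degree-$\le 5$ remainder. Edge lengths decrease geometrically, so every new piece lies within a safe angle of the leaf it hangs from.
\item Glue spinars at the cut vertices, which have degree $2$ or $3$ and therefore enough angular room: a degree-$3$ vertex has one spare direction, and the leaves have safe angles. Scale each successive piece down by a large factor so that distances across pieces never create bad pairs.
\end{itemize}

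Strongness comes from the safe angles given by the Monma–Suri construction. Checking the three conditions is a single left-to-right scan that remembers the last $6$ or $5$ and the counts of $2$'s and $3$'s seen since then, so the decision takes $O(n)$ word-RAM time. The construction is also linear in the real RAM model.
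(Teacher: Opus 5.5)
\textbf{There is a genuine gap in your sufficiency construction.}

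You cut the spine at every degree-$2$ vertex (and at one $3$ of each pair), so each piece has at most one $6$. You then glue the pieces, scaling each successive piece down by a large factor, on the grounds that the cut vertices have ``enough angular room''. They do not: two $6$'s separated by a $2$ are forced to the same scale.

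Take the EMST-drawable caterpillar $(6,2,6)$. By \cref{lem:6uniq}, the neighbourhoods of $v_1$ and $v_3$ are regular hexagons, say of radii $a>b$.
\begin{itemize}
\item[(i)] The hexagon-neighbours $q_1,q_2$ of $v_2$ around $v_3$ lie at distance $b$ from $v_2$.
\item[(ii)] Their tree paths to $v_1$ and to the hexagon-neighbours $p_1,p_2$ of $v_2$ around $v_1$ all contain the edge $v_1v_2$. So each of these distances must be at least $a$.
\item[(iii)] By the law of cosines, the angle at $v_2$ between $q_j$ and each of $p_1,v_1,p_2$ has cosine at most $b/(2a)<\tfrac12$.
\item[(iv)] Hence $q_1,q_2$ lie in the open arc of length $2\pi/3$ opposite these three directions. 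This contradicts $\angle q_1v_2q_2=2\pi/3$.
\end{itemize}
When $b\ll a$ the situation is even worse: anything much smaller than $a$ placed near $v_2$ must lie in a wedge of opening about $\pi/3$. So the scaled-down gluing fails already on this example.

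\textbf{The missing idea.} All $6$'s between two consecutive runs of $5$'s must share one triangular grid at one scale. These are the paper's blocks $S_i$ (\cref{cl:S_idrawing}), in which a $2$ or a pair of $3$'s switches between rigid $(6,4,\dots,4)$ stretches. Shrinking happens only along the degree-$5$ runs $F_i$, which are drawn by Monma--Suri. Each grid block is hung, at a smaller scale, off a leaf of the larger $F_i$, where a whole half-plane is free. By contrast, if you shrink past a grid vertex, as your ``grid first, then shrink beyond the nearest $3$'' plan does, only the narrow wedge above is available. That plan is therefore at best delicate, and it is unjustified as written.

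\textbf{Necessity.} Your angle count for a $5$ is correct. The single-$3$ case, however, rests on an unproven and vaguely stated ``slack'' lemma. A two-sided argument is simpler:
\begin{itemize}
\item[(i)] Apply \cref{cor:644grid} to the spinars on either side of the $3$-vertex $v_k$; in each of them $v_k$ is a leaf.
\item[(ii)] Each side then occupies three consecutive directions at $v_k$, at its own grid length.
\item[(iii)] The cross-pair condition requires angle $\ge\pi/3$, strictly if the two scales differ. This forces equal scales and a full regular hexagon around $v_k$.
\item[(iv)] The leaf of $v_k$ then has no admissible direction (cf.\ \cref{cl:636cater}).
\end{itemize}
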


\begin{proof}
We split the proof into several claims. We first prove that, if the given caterpillar $T$ does not fulfill the conditions of the theorem, then $T$ cannot be drawn as an EMST. Firstly, note that by~\cref{obs:MSTmax6}, if $T$ contains a vertex of degree at least 7, then it is not EMST-drawable.

\begin{restatable}[\restateref{claim:56cater}]{ourclaim}{claimFiveSixCater}
\label{claim:56cater}
Assume that $T$ has a spine of length $t \ge 2$. If $d_1=6$, $d_t\in \{5,6\}$ and for each $i \in \{2,\dots t-1\}$ we have $d_i = 4$, then $T$ is not EMST-drawable. 
\end{restatable}

\begin{restatable}[\restateref{cl:636cater}]{ourclaim}{claimSixThreeSixCater}
\label{cl:636cater}
    Assume that $T$ has a spine of length $t\ge3$. If $d_1=d_t=6$, $d_i=3$ for some $i\in \{2,\dots,t-1\}$ and $d_j = 4$ for all $j\in\{2,\dots,t-1\}$, $j\ne i$, then $T$ is not EMST-drawable. 
\end{restatable}

 Both \cref{claim:56cater} and \cref{cl:636cater} can be proven in the following way: Suppose that such a caterpillar $T$ has an EMST-drawing~$\Gamma$. Either remove or add leaves to $T$ in order to obtain a caterpillar containing $(6,4,\dots,4)$. The drawing $\Gamma$ will necessarily contain a drawing of $(6,4,\dots,4)$ that is not an embedding into the triangular grid, and thus contradicts \cref{cor:644grid}. See \cref{app:MST-caterpillars} for detailed proofs.

We next show that, if $T$ fulfills the assumptions of the theorem, there exists an EMST-drawing of $T$. 

First, we assume that $T$ contains at least one vertex of degree $5$. We write $T$ as a union of subcaterpillars  $T=  S_0  \cup F_1 \cup S_1 \cup F_2  \cup \cdots  \cup  S_{l-1} \cup F_l \cup S_l$ as follows. We define $F_1,F_2,\dots, F_l$ to be 
spinars of $T$ induced by segments of the spine consisting only of vertices of degree $5$. For $i\in \{1,\dots, l-1\}$, if $F_i = T_{ab}$, $F_{i+1} = T_{cd}$, we define $S_i = T_{(b+1)(c-1)}\setminus \{v_b,v_c\}$. 
If $F_1= T_{ab}$ and $a>1$, we define $S_0=T_{1(a-1)}\setminus \{v_a\}$; otherwise, we define $S_0$ to be empty. Similarly, if $F_l= T_{ab}$ and $b<k$, we define $S_l=T_{(b+1)k}\setminus\{v_b\}$; otherwise, we define $S_l$ to be empty.  See \cref{fig:decomposition} for an example of this decomposition. 

\begin{figure}[h]
    \centering
    \includegraphics[width = \linewidth, page=5]{figures/caterpillardrawings.pdf}
    \caption{Decomposition of $T$ into $S_0\cup F_1 \cup S_1 \cup F_2 \cup S_2 \cup F_3 \cup S_3$, where $S_3 = \emptyset$.}
    \label{fig:decomposition}
\end{figure}

We first show how to draw $S_0,S_1,\dots, S_l$.

\begin{restatable}[\restateref{cl:S_idrawing}]{ourclaim}{claimCidrawing}
\label{cl:S_idrawing}
For every $i\in \{0,\dots,l\}$, $S_i$ admits an EMST-drawing $\Gamma$ satisfying the following conditions: 

\begin{itemize}
    \item $\Gamma$ is an embedding into a triangular grid;
    \item  if $1\le i\le l-1$, the convex hull of the vertices is contained in a parallelogram with angles $\frac{\pi}3$ and $\frac{2\pi}3$ and side lengths $2$ and $s$ (in the grid units), where $s$ is the number of vertices on the spine of $S_i$;
    \item all spine vertices lie on a common line.
\end{itemize}  
\end{restatable}

\begin{claimproof}[Proof (sketch)]

Since $T$ satisfies the conditions of \cref{thm:char-6-EMST}, so does each $S_i$. Thus, between any two consecutive vertices of degree $6$ we have either two vertices of degree $3$ or one vertex of degree $2$, and those allow us to ``switch'' between the unique representations of $(6,4,\dots,4)$ segments of the spine; see the second row in \cref{fig:T_idrawing}. Similarly, if $1\le i\le l-1$, we know that the first and last vertices in the spine of $S_i$ have degree $2$ or $3$, and hence we can ensure the second condition of the claim; see the first row of \cref{fig:T_idrawing}. 
The last row of \cref{fig:T_idrawing} shows an example of the final drawing of $S_i$. 
\end{claimproof}

    \begin{figure}[h]
        \centering
        \includegraphics[width=\linewidth, page = 13]{figures/caterpillardrawings.pdf}
        \caption{Drawing of $S_i$ satisfying the conditions of \cref{cl:S_idrawing}. In the last row, dashed edges are the edges of the underlying triangular grid on which we place the vertices.}
        \label{fig:T_idrawing}
       
    \end{figure}

Before continuing, we note that, if $T$ contains no vertices of degree $5$, then it has the same structure as the $S_i$'s and can be drawn in almost the same way as in the proof of \cref{cl:S_idrawing}.

Next, we describe how to draw $F_1,F_2,\dots,F_l$.

\begin{restatable}[\restateref{cl:555caterpillars}]{ourclaim}{claimFFFiveCater}
\label{cl:555caterpillars}
    Let $T = (d_1,\dots, d_t)$ be a caterpillar such that $d_1=d_2= \cdots = d_t = 5$. Then  $T$ has a strong EMST-drawing in which the spine forms an $x$-monotone curve. 
\end{restatable}

\begin{claimproof}[Proof (sketch)]
    We root $T$ at $v_1$, apply \cref{alg:monma-suri}, and, in each step, we keep track of the position of the spine vertex to ensure $x$-monotonicity.
\end{claimproof}

Now, we combine \cref{cl:S_idrawing,cl:555caterpillars} to construct a strong EMST-drawing of $T$.

We first draw $F_1= (d_1,\dots,d_s)$ according to \cref{cl:555caterpillars}. Then, if $S_0$ is nonempty, we draw it according to \cref{cl:S_idrawing}, with edges of length $l_1$ from \cref{alg:monma-suri}. Note that the last vertex of the spine of $S_0$ is contained in the drawing of $F_1$ as a leaf adjacent to $v_1$. In particular, we can choose the last vertex of the spine of $S_0$ to be the leftmost leaf adjacent to $v_1$, say $v_l$. We first join the drawing of $S_0$ with the drawing of $F_1$ so that the spine of $S_0$ has the same slope as the edge $v_1v_l$. Lastly, we rotate the drawing of $S_0$ counterclockwise around $v_l$ by an angle of $\frac\pi6$. This way, the vertices of $S_0$ lie outside the union of circles of radius $1$ centered at $v_1$ and its neighbors (different from $v_l$), see \cref{fig:c0attach}. Now we argue that there are no bad pairs between vertices of $S_0$ and $F_1$. Note that for a pair of vertices $u,v$ where $u\in S_0$, $v\in F_1$, the path between $u$ and $v$ contains an edge of $F_1$ of length $1$. On the other hand, the edges of $S_0$ are drawn with length $l_1 < 1$ and by our previous argument $d(u,v) > 1$. Hence, $u$ and $v$ do not form a bad pair. 
 
\begin{figure}
    \centering
    \includegraphics[page=1]{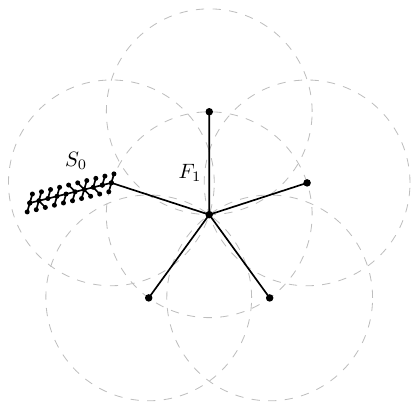}
    \caption{Attaching the drawing of $S_0$ to the drawing of $F_1$ without creating any bad pairs. The circles of radius $1$ centered at vertices of $F_1$ are drawn dashed. }
    \label{fig:c0attach}
\end{figure}

Next, assume that $i \ge 1$ and that we have a strong EMST-drawing $\Gamma_i$ of~$T_i = S_0 \cup F_1 \cup \cdots \cup S_{i-1} \cup F_i$ where the spine of $T_i$ is drawn as an $x$-monotone curve and the last vertex on the spine of $F_i$ is the rightmost vertex of the spine in $\Gamma_i$. We describe how to draw $S_i$ and $F_{i+1}$ so that the resulting drawing $\Gamma_{i+1}$ of~$T_{i+1}= S_0\cup F_1 \cup \cdots \cup  F_i \cup S_{i} \cup F_{i+1}$ is still a strong EMST-drawing with the same properties. Similar to before, the first vertex of the spine of  $S_i$ is a leaf of $F_i$; in particular, we can assume that it is the rightmost leaf $v_r$ attached to the last vertex $v$ of the spine of the current drawing $\Gamma_i$. Let $j\ge 0$ be such that the edge $vv_r$ is drawn with length $l_j$ from \cref{alg:monma-suri}. Then, we draw $S_i$ with all edges of length $l_{j+1}$ according to \cref{cl:S_idrawing}. We then attach the drawing of $S_i$ to the current drawing $\Gamma_i$ so that the two vertices representing $v_r$ in the drawing are identified, and so that the spine of $S_i$ is drawn on the line through $v$ and $v_r$. Let $\varphi$ be the slope of $vv_r$. To ensure the $x$-monotonicity of the spine, if $\varphi$ is positive (respectively, negative), we rotate the drawing of $S_i$ around $v_r$ by an angle of $\frac\pi6$ clockwise (respectively, counterclockwise). Additionally, if $\varphi$ is positive, we reflect the vertices of $S_i$ around the spine of $S_i$. Call the resulting drawing $\Gamma_i'$. Again, this process guarantees that all vertices of $S_i$ lie outside the union of circles with a diameter $l_j$ centered at $v$ and its neighbors (different from $v_r$) and, as we argued in the case of $S_0$ and $F_1$, this is sufficient to show that the vertices of $S_i$ do not form any bad pairs with the vertices of $T_i$. See \cref{fig:nobadpairs}.

\begin{figure}
    \centering
    \begin{minipage}[c]{0.35\textwidth}
        \centering
        \includegraphics[page=6]{figures/caterpillardrawings.pdf}
    \end{minipage}
    \hspace{0.15\textwidth}
    \begin{minipage}[c]{0.35\textwidth}
        \centering
        \includegraphics[page=7]{figures/caterpillardrawings.pdf}
    \end{minipage}  
    \caption{Drawing $S_{i}$ inside the purple parallelogram guarantees that no vertices of $S_{i}$ form bad pairs with vertices of $F_i$ since they lie outside all of the circles of radius $l_j$ centered at the vertices of $F_i$.  
    The right part  shows the zoomed in situation and an example drawing of $S_{i}$.}
    \label{fig:nobadpairs}
\end{figure}

Lastly, we explain how to draw $F_{i+1}$. Note that the last vertex $v_r'$ on the spine of $S_i$ (which is also the rightmost spine vertex of $\Gamma_i'$) is contained in $F_{i+1}$, as a leaf attached to the first spine vertex $v'$ of $F_{i+1}$. We draw $F_{i+1}$ according to \cref{cl:555caterpillars}. Let $v_l'$ be the leftmost leaf attached to $v'$ in the drawing of $F_{i+1}$. We attach the drawing of $F_{i+1}$ to $\Gamma_i'$ by identifying $v_l'$ and $v_r'$.  We then rotate the drawing of $F_{i+1}$ around $v_l$ so that the edge $v_l'v'$ has slope $\varphi$ (the same slope as the edge $vv_r$ in $\Gamma_i$), the drawing obtained by this is our desired drawing $\Gamma_{i+1}$. As we argued in the previous cases, this rotation guarantees that there are no bad pairs between the vertices of $F_{i+1}$ and $S_{i}$, and therefore that the  vertices of $T_i \cup S_i$ form no bad pairs. Finally, in \cref{fig:545}, it can be seen that even in the extremal case, when the spine of $S_i$ is only a single vertex, in the resulting drawing $\Gamma_{i+1}$ there are no bad pairs between the vertices of $F_{i+1}$ and the vertices of $F_{i}$ (and hence with the vertices of $T_i$). If the spine of $S_i$ consists of more than a single vertex, in $\Gamma_{i+1}$, the distance between vertices in $F_{i+1}$ and vertices in $T_i$ is increasing, so we do not introduce any bad pairs, and the resulting drawing $\Gamma_{i+1}$ is indeed an EMST-drawing of $T_i$. The spine of $\Gamma_{i+1}$ being $x$-monotone and $\Gamma_{i+1}$ being a strong EMST-drawing is guaranteed by \cref{cl:S_idrawing,cl:555caterpillars}. \qedhere

\begin{figure}[h]
    \centering
    \begin{minipage}[c]{0.58\textwidth}
        \centering
        \includegraphics[page=10, scale = 0.88]{figures/caterpillardrawings.pdf}
    \end{minipage}
    \hfill
    \begin{minipage}[c]{0.38\textwidth}
        \centering
        \includegraphics[page=11,scale = 0.88]{figures/caterpillardrawings.pdf}
    \end{minipage}  
    \caption{Drawing $F_{i+1}$ when $S_i$ consists of one spine vertex 
    when \cref{alg:monma-suri} was used for more than one step.  The regions where an additional leaf can be drawn, which exist because the corresponding angles are larger than $\frac{\pi}{3}$, are highlighted in red and zoomed in on the right.}
    \label{fig:545}
\end{figure}

\end{proof}

\section{The Bad Pair Number of Caterpillars of Max Degree Six}\label{se:caterpillars-bp-minimization}

Recall that \cref{thm:MonmaSuri} implies $\bpn(T)=0$ for caterpillars of maximum degree at most 5.

Let $T$ be a caterpillar of maximum degree 6 and such that $T$ does not satisfy the conditions of \cref{thm:char-6-EMST}. In this section, we present a linear-time algorithm to compute a planar drawing of $T$ whose number of bad pairs is minimized.

\begin{definition}\label{def:reducedmodel}
Let $T=(d_1,\dots,d_t)$ be a caterpillar such that $d_i\le 6$ for every $i\in\{1,\dots,t\}$.
The \emph{reduced model} $\widetilde{T}$ of $T$ is the caterpillar with vertices $u_1,\dots,u_s$ such that there exists a sequence $1=i_1<i_2<\dots<i_s=t$ with the following properties:
\begin{itemize}
\item $\deg_{\widetilde{T}}(u_j)=d_{i_j}$ for every $j\in\{1,\dots,s\}$,
\item $d_{i_j}\in\{5,6\}$ for every $j\in\{2,\dots,s-1\}$,
\item for every $j\in\{1,\dots,s-1\}$ and every $h$ such that $i_j<h<i_{j+1}$ it holds that $d_h\le 4$.
\end{itemize}
\end{definition}

Consider the reduced model $\widetilde{T}$ of $T$ with spine $v_1,\dots,v_t$ and any spinar $S$ of $\widetilde{T}$.
Observe that when $S$ has spine $u_j,\dots,u_{j'}$, then the spinar of $T$
with spine $v_{i_j},\dots,v_{i_{j'}}$ has reduced model $S$.
In the next proposition we denote this spinar by $S'$. In particular, we have $\widetilde{S'}=S$.

\begin{restatable}[\restateref{prop:disjointsubcaterpillars}]{proposition}{propdisjointsubcaterpillars}

\label{prop:disjointsubcaterpillars}
Let $\widetilde{T}$ be the reduced model of a caterpillar $T=(d_1,\dots,d_t)$ with spine $v_1,\dots,v_t$ and let $S_1,S_2$ be spinars of $\widetilde{T}$ with the corresponding spinars $S_1',S_2'$ in $T$. Let $\Gamma$ be any drawing of $T$.  
\begin{enumerate}
    \item \label{i1}
    If $S_1$ and $S_2$ do not share any spine vertex of $\widetilde{T}$, then no bad pair in $\Gamma$ belongs to both $S_1'$ and $S_2'$.
    \item \label{i2}
    If $S_1$ and $S_2$ share exactly one spine vertex of $\widetilde{T}$, say $u_h$, then every bad pair in $\Gamma$ that belongs to both $S_1'$ and $S_2'$ involves a leaf adjacent to $v_{i_h}$, except possibly the bad pair $v_{i_h-1}$ and $v_{i_h+1}$.
\end{enumerate}

\end{restatable}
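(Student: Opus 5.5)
The plan is to prove the proposition purely combinatorially, without using any geometry of $\Gamma$. I read ``a bad pair belongs to $S'$'' as ``both vertices of the pair are vertices of $S'$''. Under that reading, a bad pair belonging to both $S_1'$ and $S_2'$ is a pair of non-adjacent vertices of $T$ contained in $V(S_1')\cap V(S_2')$. So it suffices to compute this intersection and check which non-adjacent pairs it contains. The first step is to record, from \cref{def:subcaterpillar}, the vertex set of a spinar $T_{ab}$: the spine vertices $v_a,\dots,v_b$, all leaves of $T$ attached to them, and additionally $v_{a-1}$ (if $a>1$) and $v_{b+1}$ (if $b<t$) as leaves. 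In particular, $V(T_{ab})$ meets the spine of $T$ exactly in $v_{a-1},\dots,v_{b+1}$ (truncated to $1,\dots,t$). A non-spine leaf lies in $T_{ab}$ if and only if its neighbor is among $v_a,\dots,v_b$.

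For item~\ref{i1}, spinars of $\widetilde T$ are spine intervals, so after renaming we may assume $S_1$ has spine $u_j,\dots,u_{j'}$ and $S_2$ has spine $u_k,\dots,u_{k'}$ with $j'<k$. Then $S_1'=T_{i_j i_{j'}}$ and $S_2'=T_{i_k i_{k'}}$ with $i_{j'}<i_k$. By the description above, a non-spine leaf of $T$ cannot lie in both spinars, since its unique neighbor cannot lie in both disjoint spine ranges. The common spine vertices are those $v_p$ with $i_k-1\le p\le i_{j'}+1$. I then do a short case analysis on $i_k-i_{j'}$:
\begin{itemize}
\item if it is at least $3$, the intersection is empty;
\item if it equals $2$, the intersection is the single vertex $v_{i_{j'}+1}$;
\item if it equals $1$, the intersection is $\{v_{i_{j'}},v_{i_k}\}$, which is an edge of $T$.
\end{itemize}
In no case does the intersection contain a non-adjacent pair. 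Since bad pairs consist of non-adjacent vertices by definition, no bad pair lies in both $S_1'$ and $S_2'$.

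For item~\ref{i2}, two spine intervals of $\widetilde T$ sharing exactly one vertex $u_h$ must, up to renaming, be of the form $S_1$ ending at $u_h$ and $S_2$ starting at $u_h$. Thus $S_1'=T_{a\,i_h}$ and $S_2'=T_{i_h\,b}$ with $a\le i_h\le b$. By the same bookkeeping, the intersection consists of $v_{i_h}$, all non-spine leaves adjacent to $v_{i_h}$, and $v_{i_h-1}$ and $v_{i_h+1}$ when they exist. All of these except $v_{i_h}$ itself are neighbors of $v_{i_h}$. Hence any pair containing $v_{i_h}$ is an edge and cannot be bad. Any other pair consists of two neighbors of $v_{i_h}$: either one of them is a non-spine leaf adjacent to $v_{i_h}$, or the pair is exactly $\{v_{i_h-1},v_{i_h+1}\}$. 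This is precisely the claimed conclusion.

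I expect no real obstacle. The only delicate points are the bookkeeping of the extra leaves $v_{a-1},v_{b+1}$ that a spinar contains, including the boundary cases $a=1$ or $b=t$ (for example $h=1$ or $h=s$), and making explicit the convention that ``belongs to'' means both endpoints lie in the spinar. Note also that the statement holds for every drawing $\Gamma$, since only the non-adjacency of bad pairs is used.
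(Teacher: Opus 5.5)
Your proposal is correct and is essentially the paper's own argument: compute $V(S_1')\cap V(S_2')$ from the definition of a spinar and use only that bad pairs are non-adjacent, so $\Gamma$ plays no role. Your case $i_k-i_{j'}=2$, where the intersection is a single vertex, is handled more carefully than in the paper, which lists only the empty and two-vertex intersections.

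There is one small slip, which the paper shares through its normalization $i\le h\le k\le l$. Your claim that the two intervals must end and start at $u_h$ is false as stated. Two spine intervals can also share exactly $u_h$ when one is the singleton $\{u_h\}$ and the other contains $u_h$ as an interior vertex. In that case, however, $V(S_1')\subseteq V(S_2')$, and the intersection is again exactly $v_{i_h}$ together with all its neighbours, so your argument applies verbatim.
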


See \cref{fig:subcaterpillars} for illustrations of both cases.

\begin{figure}[h]
    \centering
    \includegraphics[width=\linewidth]{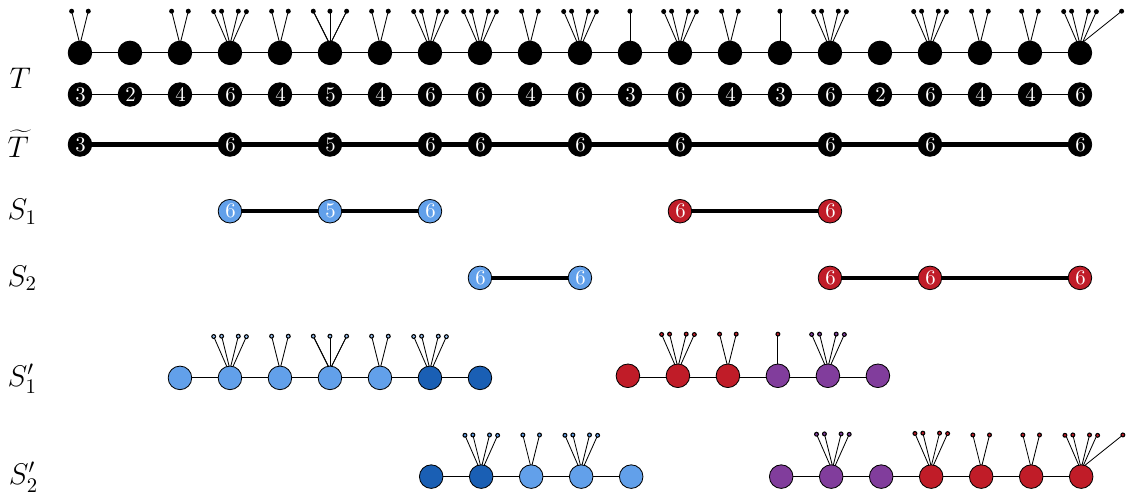}
    \caption{An illustration to the proof of \cref{prop:disjointsubcaterpillars}. In blue, we see spinars $S_1$ and $S_2$ of $\widetilde{T}$ that are disjoint as in part \ref{i1}. In red, they share one spine vertex as in part \ref{i2}. The shared vertices of $S_1'$ and $S_2'$ are marked in dark blue and purple, respectively.}
    \label{fig:subcaterpillars}
\end{figure}

Let us now describe the algorithm.  We  first construct the reduced model of the input caterpillar. Then we color and possibly orient its edges according to EMST-drawability of the spinars they induce in the input caterpillar. An edge is colored green when the spinar is EMST-drawable, and it is colored blue, purple, or red depending on which case of \cref{thm:char-6-EMST} is violated. Next, we recolor some blue and some purple edges to light blue (light purple, respectively) depending on the relative position of the edges. Finally, we color some of the vertices red. For technical details, see \cref{alg:red_vertices}. Our goal is to prove that these red vertices correspond exactly to the bad pairs in an optimal drawing. The algorithm is illustrated in \cref{fig:reduced-model}.

\begin{algorithm}[H]
\DontPrintSemicolon
\SetAlgoLined
\setcounter{AlgoLine}{0}
\KwIn{a max-degree-6 caterpillar $T=(d_1,\dots,d_t)$}
construct the reduced model $\widetilde{T}$ of $T$ \tcc*{Phase 1 - preparation  }
let its vertices be $u_1 = v_{i_1}, u_2 = v_{i_2}, \dots, u_s = v_{i_s}$\;
\For{$j \leftarrow 1$ \KwTo $s-1$ \tcc*{Phase 2 - coloring {} {} {} }}{
    \uIf{$d_{i_j} = d_{i_{j+1}} = 6$ and $d_h = 4$ for all $h$ with $i_j < h < i_{j+1}$}{
        color edge $u_j u_{j+1}$ red\;
    }
    \uElseIf{$d_{i_j} = d_{i_{j+1}} = 6$ and $d_h = 4$ for all $h$ except exactly one $h'$ with $i_j < h' < i_{j+1}$ and $d_{h'} = 3$}{
        color edge $u_j u_{j+1}$ dark blue\;
    }
    \uElseIf{$\{d_{i_j}, d_{i_{j+1}}\} = \{5,6\}$ and $d_h = 4$ for all $h$ with $i_j < h < i_{j+1}$}{
        color edge $u_j u_{j+1}$ dark purple\;
        orient it from the vertex of degree 5 to the vertex of degree 6\;
    }
    \Else{
        color edge $u_j u_{j+1}$ green\;
    }
}
\For{$j \leftarrow 1$ \KwTo $s$  \tcc*{Phase 3 - recoloring {} }}{
    \If{$u_j$ is incident with a red edge in $\widetilde{T}$}{
        color $u_j$ red\;
        recolor every dark purple edge incident with $u_j$ light purple\;
        recolor every dark blue edge incident with $u_j$ light blue\;
    }
}

\For{$j \leftarrow 1$ \KwTo $s$  \tcc*{Phase 4 - coloring vertices {} }}{
    \If{$u_j$ is the head of a dark purple edge}{
        color $u_j$ red\;
        \lIf{$u_j$ is incident with a dark blue edge}{
        recolor the edge light blue
        }
    }
}

\ForEach{maximal dark blue path $u_j, \dots, u_{j+h}$}{
    \lFor{$k \leftarrow 0$ \KwTo $\left\lfloor \frac{h-1}{2} \right\rfloor$}
    {color $u_{j+1+2k}$ red}
}

\KwOut{the number $\mathrm{rv}(T)$ of red vertices of $\widetilde{T}$}
\caption{Coloring of Reduced Caterpillar}\label{alg:red_vertices}
\end{algorithm}

\begin{theorem}\label{thm:bpn-caterpillarmax6}
For every caterpillar $T$ of maximum degree at most 6, $\bpn(T)=\rv(T)$.
\end{theorem}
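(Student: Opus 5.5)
We prove the two inequalities $\bpn(T)\ge\rv(T)$ and $\bpn(T)\le\rv(T)$ separately, working throughout with the reduced model $\widetilde{T}$ and the spinars of $T$ corresponding to its edges.

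\textbf{Lower bound.} We assign to each red vertex $u_j$ of $\widetilde{T}$ a distinct bad pair in every planar drawing $\Gamma$ of $T$, located in a spinar $S'$ of $T$ whose reduced model is a short subpath of $\widetilde{T}$ around $u_j$. Red vertices arise in three ways.

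First, $u_j$ may be incident with a red edge. The spinar of that edge is of the form $(6,4,\dots,4,6)$, which is not EMST-drawable by \cref{claim:56cater}. So it contains a bad pair, and by \cref{cor:644grid} we can locate one near each endpoint. More precisely, we grid-embed the $(6,4,\dots,4)$ part anchored at $u_j$ and argue that the other degree-$6$ endpoint forces a violation involving a leaf of $v_{i_j}$.

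Second, $u_j$ may be the head of a dark purple edge. Then \cref{claim:56cater} again yields a bad pair, and we show it can be chosen to involve a leaf at the degree-$6$ end $u_j$.

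Third, $u_j$ may be a red vertex on a maximal dark blue path. A dark blue path of $h$ edges forces at least $\lfloor (h+1)/2\rfloor$ bad pairs. Each dark blue spinar $(6,4,..,3,..,4,6)$ is non-drawable by \cref{cl:636cater}. Two consecutive dark blue edges sharing a $6$-vertex cannot both be ``fixed'' by a single bad pair at the shared vertex unless that bad pair is charged there, which gives a matching-type count.

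Disjointness of the charged bad pairs comes from \cref{prop:disjointsubcaterpillars}. Bad pairs in spinars of vertex-disjoint subpaths are distinct. For subpaths sharing one vertex, only pairs involving leaves of the shared vertex, or the pair $v_{i_h-1},v_{i_h+1}$, can coincide. The recoloring to light purple and light blue in Phases 3 and 4 is exactly what prevents double counting in these shared-vertex cases. We will need a small lemma: an edge whose bad pair lives only at an already red endpoint is ``satisfied'' by that red vertex's bad pair.

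\textbf{Upper bound.} We construct a drawing with exactly one bad pair per red vertex. We cut $T$ at the red vertices: each red $u_j$ of degree $6$ is drawn as the centre of a hexagon in which one incident leaf (or neighbour) is pushed slightly, producing exactly one bad pair. This effectively lets $u_j$ behave like a vertex of degree $5$ on one side, or decouples its two sides. After these local modifications, every maximal red-free piece satisfies the conditions of \cref{thm:char-6-EMST}. We draw each piece by that theorem (strong EMST-drawings) and glue the pieces at the red vertices using the same rotation and scaling technique as in the proof of \cref{thm:char-6-EMST}. The scaling makes edges shrink geometrically, so that no bad pairs arise across different pieces.

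\textbf{Main obstacle.} The main obstacle is the lower bound for dark blue paths together with the interaction of light edges. We must show that a single bad pair cannot simultaneously destroy the rigidity of two adjacent dark blue spinars except when it is charged at their common vertex. This requires a careful case analysis of how one bad pair at a degree-$6$ vertex perturbs the triangular-grid structure from \cref{lem:64grid}. I would attempt this by removing the leaf involved in the bad pair, which turns the spinar into a degree-$5$ type, and then checking which remaining subcaterpillars still violate \cref{thm:char-6-EMST}.
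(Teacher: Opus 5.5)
\textbf{There is a genuine gap in your lower bound.} Your upper bound is close to the paper's, but the charging scheme depends on \emph{localizing} a bad pair at each red vertex (``involving a leaf of $v_{i_j}$'', ``at the degree-6 end''). That localization is false for arbitrary drawings, and a lower bound must hold for every drawing.

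For $T=(6,4,4,6)$, embed the spinar $(6,4,4)$ on $v_1,v_2,v_3$ in the triangular grid with a straight spine, so that $v_4$ has three free grid neighbours. Put three leaves of $v_4$ there, and put the other two at unit distance in the middle of the two $\frac{\pi}{3}$-sectors between consecutive leaves of $v_4$. This drawing has exactly four bad pairs, all among leaves of $v_4$; none involves $v_1$ or its leaves.

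Similarly, draw $(4,4,\dots,4,6)$ on the grid and add one more leaf at $v_1$. This is a drawing of the dark purple type $(5,4,\dots,4,6)$ whose bad pairs all sit at leaves of the degree-5 end, not at the head.

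Without localization, part~2 of \cref{prop:disjointsubcaterpillars} gives you no distinctness. Take two red edges sharing $u_j$, or two dark purple edges sharing their degree-5 vertex. The statement ``each spinar needs two (resp.\ one) bad pairs'' is consistent with these being the same pairs, located at leaves of the shared vertex.

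\textbf{What is missing are the paper's global counting facts.}
A red spinar $(6,4,\dots,4,6)$ needs at least two bad pairs. If it had only one, delete a leaf in that pair to get an EMST-drawing of a type forbidden by \cref{thm:char-6-EMST}; if the pair consists of two spine vertices, use the two forced grid embeddings.

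Two red edges sharing a vertex need at least three bad pairs. With only two, both lie in both spinars, so by part~2 one of them uses a leaf at the shared vertex. Deleting that leaf leaves at most one bad pair in a caterpillar containing $(6,4,\dots,4,6)$ as an induced subgraph, a contradiction.

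From these two facts:
red paths follow by splitting into a matching plus possibly one two-edge path;
dark purple paths reduce to red paths by deleting one leaf at every degree-5 vertex.

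After Phases 3--4, the maximal red, dark purple and dark blue paths are pairwise vertex-disjoint. So light edges are simply ignored, and part~1 alone gives distinctness; no shared-vertex bookkeeping is needed.

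The dark blue case you call the main obstacle is actually the easy one. $\floor{\frac{h+1}{2}}$ is the maximum matching size of a path with $h$ edges, and each matched spinar is non-drawable by \cref{cl:636cater}.

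\textbf{Upper bound.} Do not cut and glue at degree-6 hexagon centres; the construction in \cref{thm:char-6-EMST} only glues at degree-5 vertices. Instead:
lower every red vertex to degree 5;
check that every edge of the new reduced model is green;
take one strong EMST-drawing of the whole caterpillar;
re-insert each removed leaf with the same length inside a safe angle, which creates exactly one bad pair.
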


\begin{proof}
Note first that if $\rv(T)=0$, that means that the algorithm did not color any vertex red, so there can be no red edges. Therefore, all dark purple edges would stay dark purple and each would produce one red vertex, and thus there can be no dark purple edges in $\widetilde{T}$. But then all dark blue edges would stay dark blue, and every dark blue path would produce at least one red vertex, and hence, all edges of $\widetilde{T}$ are green. Then \cref{thm:char-6-EMST} implies that $T$ is EMST-drawable, i.e., $\bpn(T)=0$. 

\begin{figure}
    \centering
    \includegraphics[width=\linewidth]{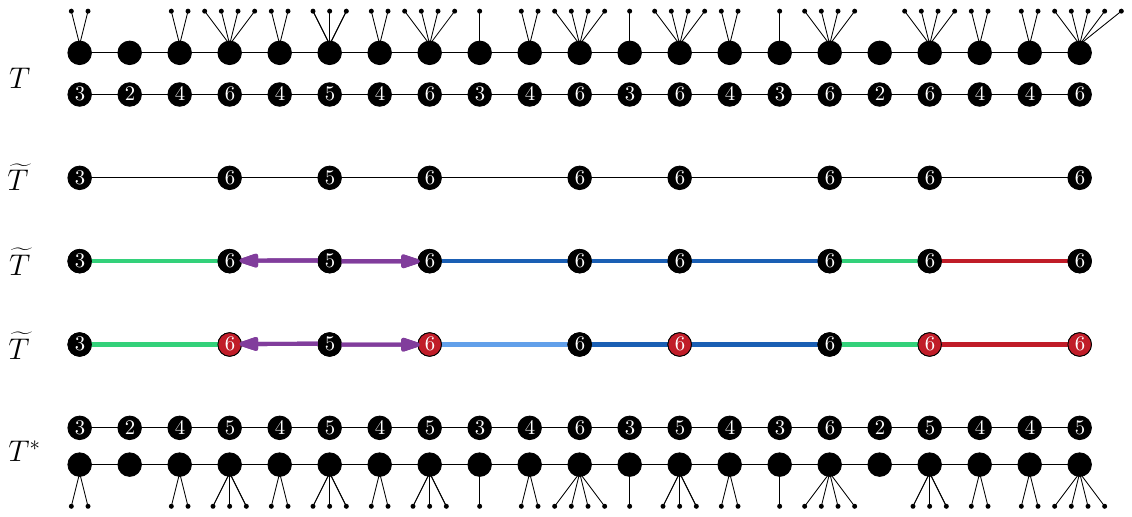}
    \caption{An example of the construction and coloring of the reduced model $\widetilde{T}$ of a caterpillar $T$, and the subsequent caterpillar $T^*$ constructed in the proof of \cref{thm:bpn-caterpillarmax6}. The first two rows show a caterpillar $T$---as a graph and as encoded in the sequence of degrees of its spine vertices. The next row shows the reduced model $\widetilde{T}$, further down is $\widetilde{T}$ after Phase~2 of  \cref{alg:red_vertices}, and then again $\widetilde{T}$ after Phase~3. The last two rows show the corresponding $T^*$.
    }
    \label{fig:reduced-model}
\end{figure}

Now we shall prove that $\bpn(T)\le\rv(T)$. Let $\widetilde{T}$ be colored by \cref{alg:red_vertices}. Consider the caterpillar $T^*=(d^*_1,\dots,d^*_t)$ defined as follows:
\begin{itemize}
    \item 
$d^*_{i_j}=d_{i_j}-1$  if $u_j$  is red in $\widetilde{T}, j\in\{1,\dots,s\}$, and
    \item 
$d^*_{h}=d_{h}$ for all other $h\in \{1,\dots,t\}$. 
\end{itemize}
Apply \cref{alg:red_vertices} to $T^*$. All edges of $\widetilde{T^*}$ will become green, and thus $T^*$ has an EMST-drawing. 

\cref{thm:char-6-EMST} implies that $T^*$ has a drawing $\Gamma$ with no bad pairs and such that every vertex of degree 5 has a leaf with a safe angle. If the degree of a vertex $v_h$ was reduced (when going from $T$ to $T^*$), then $h=i_j$ for some $j\in\{1,\dots,s\}$ and $u_j$ was red in $\widetilde{T}$, which means $d_{i_j}=6$ and $d^*_{i_j}=5$. Take the drawing $\Gamma$ and for every such $h=i_j$, choose a leaf $\ell$ adjacent to $v_{i_j}$ with a safe angle and add a new leaf $\ell'$ adjacent to $v_{i_j}$ such that the lengths of $v_{i_j}\ell$ and $v_{i_j}\ell'$ are equal and  the angle between the leaf edge $v_{i_j}\ell$ and the new edge $v_{i_j}\ell'$ is smaller than the safe angle. Each such added leaf creates exactly one bad pair, so we have a drawing of $T$ with $\rv(T)$ bad pairs.

The inequality $\bpn(T)\ge\rv(T)$ follows from the following series of claims,
which are proven by estimating the number of bad pairs caused by breaking the different conditions of \cref{thm:char-6-EMST}.

\begin{restatable}[\restateref{cl:bpn-catmax6-1}]{ourclaim}{claimCatmaxSixA}

\label{cl:bpn-catmax6-1}
If $\widetilde{T}$ has exactly two vertices and forms a red edge, then $\bpn(T)\ge 2$.    

\end{restatable}

\begin{restatable}[\restateref{cl:bpn-catmax6-2}]{ourclaim}{claimCatmaxSixB}

\label{cl:bpn-catmax6-2}
    If $\widetilde{T}$ has exactly three vertices and forms a path consisting of two red edges, then $\bpn(T)\ge 3$.

\end{restatable}

\begin{restatable}[\restateref{cl:bpn-catmax6-3}]{ourclaim}{claimCatmaxSixC}

\label{cl:bpn-catmax6-3}
    If $\widetilde{T}$ only has red edges, then $\bpn(T)\ge s$.

\end{restatable}

\begin{restatable}[\restateref{cl:bpn-catmax6-4}]{ourclaim}{claimCatmaxSixD}

\label{cl:bpn-catmax6-4}
    If $\widetilde{T}$  consists of one dark purple edge, then $\bpn(T)\ge 1$.

\end{restatable}

\begin{restatable}[\restateref{cl:bpn-catmax6-5}]{ourclaim}{claimCatmaxSixE}

\label{cl:bpn-catmax6-5}
    If $\widetilde{T}$ consists of two dark purple edges with the common vertex of degree~5, then $\bpn(T)\ge 2$.

\end{restatable}

\begin{restatable}[\restateref{cl:bpn-catmax6-6}]{ourclaim}{claimCatmaxSixF}

\label{cl:bpn-catmax6-6}
    If $\widetilde{T}$  consists of only dark purple edges, then $\bpn(T)\ge \rv(T)$.

\end{restatable}

\begin{restatable}[\restateref{cl:bpn-catmax6-7}]{ourclaim}{claimCatmaxSixG}

\label{cl:bpn-catmax6-7}
    If $\widetilde{T}$  is a path of $h$ dark blue edges, then $\bpn(T)\ge \floor{\frac{h+1}{2}} = \rv(T)$.

\end{restatable}

\medskip
To wrap up the proof, observe that after the spine edges (and vertices) of $\widetilde{T}$ are colored by \cref{alg:red_vertices}, the inclusion-max imal red,  dark purple  and  dark blue paths are  pairwise vertex disjoint. Furthermore, \Cref{cl:bpn-catmax6-3,cl:bpn-catmax6-6,cl:bpn-catmax6-7} imply that the spinars of $T$ corresponding to these paths produce at least as many bad pairs as the numbers of red vertices in these components. \cref{prop:disjointsubcaterpillars}, part \ref{i1} implies that these bad pairs are distinct, and thus the total number of bad pairs in any drawing of $T$ is bounded from below by $\rv(T)$.  
\end{proof}

From the proof of \cref{thm:bpn-caterpillarmax6} and the fact that both the algorithms produced by \cref{thm:char-6-EMST} and \cref{alg:red_vertices} work in linear time, we obtain the following corollary.

\begin{corollary}\label{cor:max-6-algo}
There is an algorithm that takes as input an $n$-vertex caterpillar $T$ of maximum degree at most 6 and produces a planar drawing $\Gamma$ with $\bpn(\Gamma)=\bpn(T)$ in $O(n)$ time in the real RAM model.
Moreover, $\bpn(T)$ can be computed in $O(n)$ time even in the word RAM model.
\end{corollary}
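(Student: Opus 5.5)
The corollary follows by combining the two constructive halves of the proof of \cref{thm:bpn-caterpillarmax6} with running-time bookkeeping. For the algorithm I would do the following. First compute the reduced model $\widetilde{T}$ of $T$ by one left-to-right scan of the degree sequence $(d_1,\dots,d_t)$, keeping the indices $i_j$ with $d_{i_j}\in\{5,6\}$ together with $i_1=1$ and $i_s=t$. Next run \cref{alg:red_vertices} on $\widetilde{T}$. Then form $T^*$ by decreasing $d_{i_j}$ by one for every red $u_j$. Call the algorithm of \cref{thm:char-6-EMST} on $T^*$ to obtain a strong EMST-drawing $\Gamma^*$. Finally, for each red vertex, add back one leaf inside the safe angle of an existing leaf, exactly as in the upper-bound part of the proof of \cref{thm:bpn-caterpillarmax6}.

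Correctness needs two facts. The drawing produced has exactly $\rv(T)$ bad pairs, since each added leaf creates exactly one bad pair by the safe-angle argument. And $\rv(T)=\bpn(T)$ by \cref{thm:bpn-caterpillarmax6}. I would also check that $\Gamma^*$ really exists: the proof of \cref{thm:bpn-caterpillarmax6} already asserts that every edge of $\widetilde{T^*}$ is green, so \cref{thm:char-6-EMST} applies. The drawing stays planar because the new edge lies within the safe angle of an edge incident to the same vertex and has the same length. Rotation by an angle at most the safe angle keeps an EMST-drawing, so the rotated leaf position cannot land on an existing vertex or edge. Adding the leaf in that wedge therefore introduces no crossing.

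For the running time, each phase of \cref{alg:red_vertices} is a single pass over the $s-1\le t$ edges or the $s$ vertices. Phase~2 checks the $d_h$ strictly between consecutive $i_j$. These intervals are disjoint, so the total work is $O(t)$. The dark blue path handling in the last phase is linear in total path length. All of this uses only integer comparisons and counting, so computing $\bpn(T)=\rv(T)$ runs in $O(n)$ time in the word RAM model. The drawing step invokes \cref{thm:char-6-EMST}, which takes $O(n)$ time in the real RAM model. Adding at most $s\le n$ leaves costs $O(1)$ each, provided the EMST-drawing algorithm reports a leaf with its safe angle $\varepsilon_v$ at each degree-5 vertex.

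The main obstacle I expect is that last requirement. The strong drawing is guaranteed to contain such a leaf, but $\varepsilon_v$ must be computable in $O(1)$ per vertex. My first approach is to extract it from the construction itself. In the Monma--Suri parts, the slack angles $\theta_i$ from \cref{alg:monma-suri} give an explicit safe angle. In the triangular-grid parts, a degree-5 vertex has a free grid direction, and the neighbouring leaves have safe angles bounded below by a constant fraction of $\pi/3$. So the algorithm can simply record a safe angle when it places each leaf, and the local insertion of the new leaf stays constant-time.
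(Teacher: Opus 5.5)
Your proposal is correct and follows the paper's argument. The paper derives the corollary in one line from two ingredients: the upper-bound construction in the proof of \cref{thm:bpn-caterpillarmax6} (draw $T^*$ by \cref{thm:char-6-EMST}, then re-insert one leaf per red vertex inside a safe angle), and the linear running times of \cref{thm:char-6-EMST} and \cref{alg:red_vertices}. You spell out the bookkeeping, including the safe-angle issue the paper leaves implicit.

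One small correction: drop the triangular-grid remark in your last paragraph. In the decomposition of \cref{thm:char-6-EMST}, every degree-5 spine vertex of $T^*$ lies in some $F_i$ drawn by \cref{alg:monma-suri}, so that case never arises. As stated it would fail anyway: a leaf of a grid vertex with five occupied neighbouring grid points has an occupied point at angle exactly $\pi/3$ on at least one side, so it has no two-sided safe angle.
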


\section{Beyond Degree Six}\label{se:beyond}

In this section, we consider trees $T$ with maximum degree at most $\Delta$ and the problem of computing planar drawings of $T$ with $\bpn(T)$ bad pairs. 
We compute $\bpn(T)$ in the case when $T$ is a star and, for general trees of maximum degree $\Delta$, we give a $\Delta^2n\log n$ upper bound.

\subsection{Stars}\label{se:stars}
We first consider drawings of stars and, for every positive integer $s$, we determine exactly the number $\bpn(K_{1,s})$ (see \cref{thm:star_bpn}).

We say that a drawing $\Gamma$ of a star $K_{1,s}$ is \emph{pentagonal} when the leaves of the star can be partitioned into five groups so that any two leaves in the same group form a bad pair and any two leaves in different groups form a good pair.
Observe that a drawing of $K_{1,5}$ in which leaves are drawn in the vertices of a rectangular pentagon and the center vertex in the center of the pentagon is pentagonal, and each leaf has a safe angle $\frac{\pi}{15}$.
Thus, we can obtain a pentagonal drawing of a star $K_{1,s}$ by first drawing every leaf in one of the vertices of a pentagon and subsequently rotating the edge around the center by a unique angle smaller than $\frac{\pi}{30}$.

Let $t(n,r)$ denote the maximum number of edges in a graph on $n$ vertices that does not contain $K_{r+1}$ as a subgraph.
For a complementary notion, let $t'(n,r)$ denote the minimum number of edges in a graph on $n$ vertices such that every induced subgraph on $r+1$ vertices has at least one edge.
Observe that $t'(n,r) = \binom{n}{2} - t(n,r)$.
Turán's theorem~\cite{turanthm} gives that the unique extremal graph with $t(n,r)$ edges and no subgraph isomorphic to $K_{r+1}$ is a complete $r$-partite graph with parts differing in size by at most one.
As a consequence, we get the following equations.
\begin{align*}
   t'(0,r) &= 0\text{,}
\\ t'(n,r) &= t'(n-1,r) + \floor{\frac{n-1}{r}}\text{,}    
\\ t'(n,r) &= r\cdot\binom{\floor{n/r}}{2}+\floor{\frac{n}{r}}(n \bmod r)\text{.}
\end{align*}

\begin{restatable}[\restateref{obs:star_pentagonal}]{observation}{obsStarPentagonal}

\label{obs:star_pentagonal}
    For every positive integer $s$, there is a pentagonal drawing~$\Gamma$ of $K_{1,s}$ with $\bpn(\Gamma)=t'(s,5)$.

\end{restatable}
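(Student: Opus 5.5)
Our plan is to realize the pentagonal construction described just before the statement, with the $s$ leaves split into five groups whose sizes differ by at most one, and then count bad pairs. We draw $K_{1,5}$ with its center $c$ at the center of a regular pentagon of circumradius $1$ and its leaves $p_1,\dots,p_5$ at the corners. Consecutive leaf edges then form angles $\frac{2\pi}{5}$, so $d(p_a,p_b)\ge 2\sin\frac{\pi}{5}>1$ for $a\ne b$. We partition the $s$ leaves into groups $G_1,\dots,G_5$ with $|G_a|\in\{\floor{s/5},\lceil s/5\rceil\}$. Each leaf of $G_a$ is placed at distance $1$ from $c$, in the direction of $p_a$ rotated by its own distinct angle in $(-\frac{\pi}{30},\frac{\pi}{30})$. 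All edges have length $1$ and share only the endpoint $c$, so the drawing is planar.

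Next we classify the pairs. Every edge of the star has length $1$. A pair consisting of $c$ and a leaf is an edge, so only leaf--leaf pairs need checking, and each such pair is at distance less than $1$ exactly when it is bad. Two leaves in the same group are rotated from the same corner direction by distinct angles differing by less than $\frac{\pi}{15}$. Their distance is $2\sin(\alpha/2)$ with $0<\alpha<\frac{\pi}{15}$, which is positive (so the points are distinct) and less than $1$, hence the pair is bad. Two leaves in different groups subtend an angle at least $\frac{2\pi}{5}-\frac{\pi}{15}=\frac{\pi}{3}$ at $c$. Their distance is therefore at least $2\sin\frac{\pi}{6}=1$, which is not smaller than the longest edge on their path, hence the pair is good. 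This is exactly the safe angle $\frac{\pi}{15}$ remark, and the drawing is pentagonal.

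Finally we count. The number of bad pairs is $\sum_{a=1}^{5}\binom{|G_a|}{2}$. With $s=5q+r$ and $0\le r<5$, this sum equals $5\binom{q}{2}+rq$. This matches the closed formula $t'(s,5)=5\binom{\floor{s/5}}{2}+\floor{s/5}(s\bmod 5)$ quoted above; equivalently, it is the number of edges in the complement of the balanced Turán graph. We also check that $\binom{q+1}{2}=\binom{q}{2}+q$. The main point of care is the boundary case of the cross-group distance, which comes out equal to $1$. Equal length is not a bad pair, since the definition uses strict inequality, so we keep the rotations strictly inside $(-\frac{\pi}{30},\frac{\pi}{30})$ and the construction is safe.
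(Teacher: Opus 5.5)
Your proof is correct and is essentially the paper's construction. The paper builds the same balanced pentagonal drawing inductively, adding each new leaf to a currently smallest group and using the recurrence $t'(s,5)=t'(s-1,5)+\floor{\frac{s-1}{5}}$, whereas you place the balanced groups all at once, count $\sum_a\binom{|G_a|}{2}=5\binom{q}{2}+rq$ directly via the closed form, and also spell out the geometric check (same-group distances $2\sin(\alpha/2)<1$, cross-group angles at least $\frac{\pi}{3}$) that the paper leaves to its safe-angle remark before the statement.
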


\begin{theorem}\label{thm:star_bpn}
   For every $s \in \mathbb{N}$, we have
    \[
        \bpn(K_{1,s}) = \begin{cases} 0 & \text{ for } s\le 6\text{,}\\
            t'(s,5)& \text{ for } s> 6 \text{.}
        \end{cases}
           \]
           where $t'(s,5)=5\cdot \binom{\lfloor s/5 \rfloor}{2}+\floor{\frac{s}{5}}(s \bmod 5) \approx \frac{1}{10} \cdot s^2 +\Oh{s}$.
\end{theorem}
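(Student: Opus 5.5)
The upper bound is immediate from what precedes the statement. For $s\le 6$ the star $K_{1,s}$ is EMST-drawable: for $s\le 5$ this follows from \cref{thm:MonmaSuri}, and for $s=6$ we place the leaves at the vertices of a regular hexagon centered at the center vertex. Any two consecutive leaves are then at distance exactly the edge length, so the strict inequality never holds and there are no bad pairs. For $s>6$, \cref{obs:star_pentagonal} gives a pentagonal drawing with exactly $t'(s,5)$ bad pairs. The closed form and the asymptotics of $t'(s,5)$ come from the Turán formulas recalled just before the statement. All the work is therefore in the lower bound $\bpn(K_{1,s})\ge t'(s,5)$ for $s\ge 7$.

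For the lower bound, fix any planar drawing $\Gamma$ of $K_{1,s}$ with center $c$, and define the \emph{good graph} $G$ on the leaves. Two leaves $a,b$ are adjacent in $G$ when they form a good pair, i.e.\ $d(a,b)\ge\max(|ca|,|cb|)$. The pair $c,a$ is always an edge of $T$, so the only candidates for bad pairs are pairs of leaves. Hence $\bpn(\Gamma)=\binom{s}{2}-|E(G)|$, and by the definition of $t'$ it suffices to show that $G$ contains no $K_6$. Then $|E(G)|\le t(s,5)$, and so $\bpn(\Gamma)\ge\binom s2-t(s,5)=t'(s,5)$.

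The key geometric step is therefore: no six leaves are pairwise good. Suppose $a,b$ form a good pair with $|ca|\le|cb|$. In the triangle $cab$ the side $ab$ is at least as long as the other two sides, so the angle at $c$ is at least $\pi/3$. Equality in the angle forces $|ab|=|cb|$ and $|ca|=|cb|$, i.e.\ an equilateral triangle. Six pairwise good leaves would thus have pairwise angular separations at least $\pi/3$ around $c$. Since the six consecutive gaps sum to $2\pi$, every gap equals $\pi/3$ and every consecutive triangle is equilateral. Then all six leaves lie at a common distance $r$ from $c$ and form a regular hexagon.

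This equality case is exactly where $s\ge 7$ is needed, and I expect it to be the main obstacle. The hexagon itself is not a contradiction, so the argument must use a seventh leaf $x$. With the six leaves forming a regular hexagon of radius $r$, the leaf $x$ lies in a closed angular sector of width $\pi/3$ between two hexagon leaves, say $a$ and $b$. I would argue that $x$ forms a bad pair with at least one of $a,b$. Since $\angle acb=\pi/3$, the angle between $cx$ and one of $ca,cb$, say $ca$, is at most $\pi/6$ (or $x$ lies on the ray $ca$). If $x$ lies strictly inside segment $ca$ or collinear with it, the drawing is not planar. Otherwise, write $\alpha=\angle xca\le\pi/6$. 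In triangle $xca$ the angle $\alpha$ at $c$ is strictly less than $\pi/3$, so $d(x,a)<\max(|cx|,|ca|)$ and $x,a$ form a bad pair.

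This does not yet contradict the $K_6$-freeness of $G$ directly, because $x$ was not one of the six. So I would reorganize the counting. An approach that avoids the equality issue is to strengthen the claim to: every set of $7$ leaves contains a bad pair among every $6$-subset or among pairs involving the others. That route is messy. The cleaner route is to note that if $G$ contains a $K_6$, that $K_6$ is a regular hexagon, and by the computation above every other leaf $x$ is bad with at least one hexagon vertex; in fact $x$ is bad with all hexagon vertices within angle less than $\pi/3$ of it, which is at least two unless $x$ is aligned with one of them. I would then bound the bad pairs directly. Each $K_6$ in $G$ is a rigid hexagon, and two distinct such hexagons cannot share leaves arbitrarily. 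A short check shows that any leaf lies in at most one good $K_6$ configuration relative to a fixed hexagon, because the other leaves are forced to be bad with it. Using this, one shows that deleting one leaf from each good $K_6$ and adding back the incurred bad pairs never beats the Turán bound: the total is at least $t'(s,5)$, using the recursion $t'(n,5)=t'(n-1,5)+\lfloor (n-1)/5\rfloor$. Induction on $s$ with base $s=7$ then finishes the proof. The case $s=7$ is checked directly: a hexagon plus one leaf gives at least $2=t'(7,5)$ bad pairs, and otherwise $G$ is $K_6$-free.
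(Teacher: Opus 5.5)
Your upper bound and your geometric facts are sound. A $K_6$ in the good graph $G$ is a regular hexagon centered at the center $c$. By planarity, every other leaf lies strictly inside one of the six sectors of angle $\pi/3$, so it forms a bad pair with both hexagon leaves bounding that sector.

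The gap is the counting once $G$ does contain a $K_6$. Your initial reduction to $K_6$-freeness is false for $s\ge 7$, as you notice. What replaces it (``deleting one leaf from each good $K_6$ and adding back the incurred bad pairs never beats the Tur\'an bound'') is asserted rather than proved. With the recursion $t'(n,5)=t'(n-1,5)+\lfloor (n-1)/5\rfloor$, the deleted leaf must lie in at least $\lfloor (s-1)/5\rfloor$ bad pairs, and for an arbitrary hexagon leaf this is false. Place the hexagon at angles $0,\pi/3,\dots,5\pi/3$ and all other leaves near angle $5\pi/6$. The hexagon leaf at angle $0$ then forms an obtuse angle at $c$ with each of them, and it is in no bad pair at all. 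Your remark that a leaf lies in at most one good $K_6$ does not supply the missing inequality.

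The paper removes the whole hexagon $U$ rather than one leaf. The other $s-6$ leaves give at least $t'(s-6,5)$ bad pairs by induction; this needs $s-6\ge 7$, since $t'(6,5)=1>0=\bpn(K_{1,6})$. The sector argument gives at least $2(s-6)$ more bad pairs between $U$ and the rest. Hence $\bpn(\Gamma)\ge t'(s-6,5)+2(s-6)$. Moreover $t'(s,5)-t'(s-6,5)=\sum_{j=s-6}^{s-1}\lfloor j/5\rfloor\le \tfrac65 s-2\le 2(s-6)$ for $s\ge 13$, and for $7\le s\le 12$ one checks $t'(s,5)\le 2(s-6)$ directly.

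Your one-leaf induction can also be repaired by averaging. Delete the leaf of $U$ lying in the most bad pairs with leaves outside $U$. By the $2(s-6)$ count it lies in at least $\lceil (s-6)/3\rceil$ of them, and $\lceil (s-6)/3\rceil\ge\lfloor (s-1)/5\rfloor$ for all $s\ge 8$, with $s=7$ as the base. Either way, this quantitative comparison is the substance of the lower bound, and your proposal does not contain it.
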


\begin{proof}
    The value $\bpn(K_{1,6})=0$ is certified by a drawing in which the center of the star is represented in the origin, and the leaves of the star are represented by six vertices of a regular hexagon with the center of the hexagon in the origin.

    By \cref{obs:star_pentagonal} we know that $\bpn(K_{1,s})\le t'(s,5)$.
    It remains to show that for every positive integer $s \neq 6$, we have $\bpn(K_{1,s})\ge t'(s,5)$.
    For any drawing~$\Gamma$ of $K_{1,s}$, let $G(\Gamma)$ be a graph defined on the vertex set being the set of leaf vertices in~$K_{1,s}$.
    In the graph~$G(\Gamma)$ there is an edge joining vertices $u_1$ and $u_2$ if and only if $u_1$, $u_2$ is a good pair in the drawing~$\Gamma$.
    If $G(\Gamma)$ is $K_6$-free, then by the Tur\'an Theorem, there are at most $t(s,5)$ edges in $G(\Gamma)$,
    the number of good pairs in the drawing $\Gamma$ is at most $t(s,5)$, and $\bpn(\Gamma) \ge t'(s,5)$.

    Thus, it remains to show that for every $s\ge 7$ and a drawing $\Gamma$ of $K_{1,s}$ with the clique number of $G(\Gamma)$ at least $6$ we have $\bpn(\Gamma)\ge t'(s,5)$.
    When $G(\Gamma)$ contains a clique $K_6$, then the vertices of the clique are represented as vertices of a regular hexagon in $\Gamma$ and the central vertex of the star is represented as the center of this hexagon.
    It follows, that every other leaf vertex creates at least two bad pairs with the vertices of the clique $K_6$, and
    $\bpn(\Gamma) \ge 2(s-6)$.

    The proof goes by induction on $s$.
    For $s\ge7$ and $s\le10$ observe that $t'(s,5) = s-5 \le 2(s-6)$.
    For $s\ge11$ and $s\le12$ observe that $t'(s,5) = 5 + 2(s-10) \le 2(s-6)$.
    
    Now, let $s \ge 13$ and $U$ be the set of six vertices that span a clique $K_6$ in $G(\Gamma)$.
    By the induction hypothesis applied to $G\setminus U$ there are at least $t'(s-6,5)$ bad pairs in the drawing~$\Gamma$ among the vertices not in $U$.
    Further, every vertex not in $U$ creates at least  two bad pairs with vertices in $U$.
    We get
    \[
        \bpn(\Gamma) \ge t'(s-6,5) + 2(s-6)
        \text{.}
    \]
    
    On the other hand, we have
    \begin{align*}
        t'(s,5) &= t'(s-1,5)+\floor{\tfrac{s-1}{5}} = t'(s-2,5)+\floor{\tfrac{s-2}{5}}+\floor{\tfrac{s-1}{5}}\\
        &= t'(s-6,5)+\floor{\tfrac{s-6}{5}}+\floor{\tfrac{s-5}{5}}+\floor{\tfrac{s-4}{5}}+\floor{\tfrac{s-3}{5}}+\floor{\tfrac{s-2}{5}}+\floor{\tfrac{s-1}{5}}
        \\    &\le t'(s-6,5)+6\cdot\tfrac{s}{5} - \tfrac{1}{5}-\tfrac{2}{5}-\tfrac{3}{5}-\tfrac{4}{5}= t'(s-6,5)+\tfrac{6}{5}s - 2
        \text{.} \end{align*}
    For $s\ge13$ we have that $\frac{6}{5}s-2 \le 2s-12$, and we get $\bpn(\Gamma) \ge t'(s,5)$.
    \end{proof}

\subsection{Trees} \label{se:trees}

In this section we consider trees of maximum degree $\Delta \ge 6$. 
Note that for any tree $T$, it is clear that $\bpn(T) \le n^2$. The main result of this section is a construction of a drawing that contains a subquadratic number of bad pairs if $\Delta=o(\sqrt{n/\log n})$.

\begin{theorem}\label{thm:max-D-ary-tree}

    Let $T$ be a tree of maximum degree $\Delta$. Then, $\bpn(T)<\Delta^2 n\log n$.
\end{theorem}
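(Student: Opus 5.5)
Our plan is to build the drawing recursively via a centroid (separator) decomposition of $T$, so that every vertex lies in $O(\log n)$ levels, and to charge each bad pair to a level and a vertex so that at most $O(\Delta^2)$ bad pairs are charged per vertex per level. Pick a centroid $c$ of $T$. Removing $c$ leaves at most $\Delta$ subtrees $T_1,\dots,T_k$, each with at most $n/2$ vertices. We draw $c$ at the origin and draw each $T_i$ recursively inside a narrow wedge (a thin sector) around a ray from $c$, with the rays spread out and the neighbour $c_i$ of $c$ in $T_i$ placed on its ray. The edge $cc_i$ is made much longer than every edge inside the drawing of $T_i$. This is the same scale-separation idea as in \cref{alg:monma-suri}: the recursive drawing of $T_i$ is shrunk to a tiny diameter $\delta$ and placed at distance $1$ from $c$, and planarity holds because the wedges are disjoint.

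To analyse bad pairs, consider a pair $u,v$ whose path passes through $c$. A pair with $u\in T_i$, $v\in T_j$, $i\neq j$, lies in different clusters. If the rays are separated by angle at least $\pi/3+\epsilon$ and clusters are tiny, then $d(u,v)\approx 2\sin(\alpha/2)>1\ge$ every edge on the path. This is good, except that only $5$ or $6$ rays can be separated this much, while we have up to $\Delta$ subtrees. So we group the $k\le\Delta$ subtrees into $5$ direction classes as in the pentagonal star drawing of \cref{obs:star_pentagonal}, with subtrees of one class placed at nearly the same direction but at slightly different angles. Pairs between subtrees of different classes, and pairs $(c,v)$ with the long edge $cc_i$, are then good. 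We must also ensure that a pair within one subtree $T_i$ whose path does not use $c$ only has short edges. This holds inside the cluster by induction, since pairs within $T_i$ are exactly the pairs of the recursive drawing, up to similarity.

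The remaining problem is pairs between two different subtrees $T_i,T_j$ in the same direction class. Their distance can be smaller than the edge $cc_i$ of length $1$, so all such pairs may be bad, and there are $|T_i||T_j|$ of them, which is too many. The main obstacle is to avoid this quadratic blow-up. My first attempt is to make the edges $cc_i$ in the same class have geometrically different lengths, e.g. radial distances $1,2,4,\dots$, or nearly the same direction but with clusters separated by a distance exceeding the longer of the two edges. Then $d(u,v)$ exceeds $\max(|cc_i|,|cc_j|)$ and the cross pairs become good. Whether this is achievable without crossing is exactly what must be checked. If it fails, the fallback is an accounting argument. A vertex $u\in T_i$ forms a bad pair with $v\in T_j$ only if $v$ lies in the disc of radius $|cc_j|$ around $u$. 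By shrinking the clusters we arrange that each such disc meets only $O(\Delta)$ vertices from the other subtrees, specifically their roots $c_j$ plus their neighbours, which are at most $\Delta$ each. That gives $O(\Delta^2)$ bad pairs charged to each $u$ at this level.

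Summing, each vertex is charged at most about $\Delta^2$ bad pairs (half-counted) per recursion level in which it participates. The recursion depth is at most $\log_2 n$ because centroid subtrees halve in size. Hence the total is below $\Delta^2 n\log n$. For the strict inequality and the exact constant, I would tune the per-level count to at most $\Delta^2-1$ per vertex, or to $\Delta(\Delta-1)$, then verify. Linear-time construction is a remark and not needed for the bound.
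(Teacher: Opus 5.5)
There is a genuine gap, and it sits exactly at the step you flag as the main obstacle. Neither of your remedies closes it.

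\textbf{Why your remedies fail.} Suppose two subtrees $T_i,T_j$ hang off $c$ along rays forming an angle $\theta<\pi/3$, and let $a=|cc_i|\le b=|cc_j|$. Then
\[
b^2-d(c_i,c_j)^2=a(2b\cos\theta-a)\ge ab(2\cos\theta-1)>0,
\]
so $d(c_i,c_j)<b$ however you choose the lengths. This is the same computation that limits EMST degrees to six, so giving the edges geometrically different lengths cannot make same-class pairs good.

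Your fallback goes the wrong way. Shrinking the clusters puts all of $T_j$ within $\delta_j$ of $c_j$. From the display, $b-d(c_i,c_j)\ge a(2\cos\theta-1)/2$. Hence once $\delta_i+\delta_j<a(2\cos\theta-1)/2$, every one of the $|T_i|\,|T_j|$ pairs is bad, because $cc_j$ lies on their path. So no disc argument can leave only $O(\Delta)$ partners per vertex.

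This is unavoidable for any layout with compact clusters around the centroid, since seven rays always contain two at angle less than $\pi/3$. For a spider with $\Delta\ge 7$ legs of equal length, two legs of about $n/\Delta$ vertices each are forced into that position. This gives $\Omega(n^2/\Delta^2)$ bad pairs at the top level alone, which exceeds $\Delta^2 n\log n$ for large $n$.

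A secondary issue: in a centroid-rooted recursive drawing, the attachment vertex $c_i$ need not be visible from outside the cluster. Planarity of the edge $cc_i$ therefore needs an extra invariant that you do not provide.

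\textbf{The missing idea.} Cross-subtree pairs should be made good by scale separation along a vertex ordering, not by angular separation. The paper does this as follows.
\begin{itemize}
\item Root $T$ at any non-leaf $v_1$ and list the vertices in DFS preorder, always entering the child subtree with the fewest descendants first.
\item Place the $i$-th vertex at $(3^i,\varepsilon i)$. A later subtree is then spread over exponentially growing scales rather than compressed into a blob.
\item Take $v_a$ in an earlier child subtree and $v_b$ in a later one. Then $d(v_a,v_b)\ge 3^b-3^a$.
\item Edges on the $v_a$ side join vertices of index at most $a$, so they are shorter than $3^a$.
\item Edges inside the later subtree join indices in $(a,b]$, so they are shorter than $3^b-3^a$.
\item The only possible long edge is therefore the root--child edge $v_1v_b$. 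So the only cross bad pairs are (vertex, later child of the root).
\end{itemize}
This gives at most $\sum_i (k-i)n_i$ new bad pairs. That is at most $\Delta-1$ per vertex outside the largest child subtree, and none inside it. Since $n_i<n/2$ for $i<k$, each vertex is charged at most $\log n$ times.

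The paper packages this as the inequality
\[
\sum_i cn_i\log n_i+\sum_i(k-i)n_i<cn\log n,\qquad c=(\Delta-1)/\log\bigl(\tfrac{\Delta}{\Delta-1}\bigr)<\Delta^2 .
\]
So the $\log n$ comes from the heavy-child-last ordering, not from centroid depth.
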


\begin{proof}
    Assume that $\Delta\geq 6$, as otherwise $\bpn(T)=0$.
    We construct a drawing $\Gamma(T)$ of $T$ in which $\bpn(\Gamma(T))$ is bounded by $cn\log n$, where $c=\frac{\Delta-1}{\log\left(\frac{\Delta}{\Delta-1}\right)}$ and $\log$ is the base two logarithm. Note that this suffices, as we have $\frac{1}{\Delta-1}<1$, so $\log\bigl(1+\frac{1}{\Delta-1}\bigr)>\frac{1}{\Delta-1}$, and thus $c<\Delta^2 $.
    
    We will show the claim by induction on the height of the tree.
       The statement holds trivially for trees of height one (i.e. stars), as the number of bad pairs is at most $\Delta^2 < \Delta n < cn \log n$.
    
    Let $T$ be a tree of maximum degree $\Delta$, rooted at an arbitrarily chosen non-leaf vertex $v_1$. Let $v_1,\dots, v_n$ be the list of vertices of $T$ sorted according to DFS preordering (i.e. according to the first time they were visited by DFS), such that in each iteration, we visit the unvisited child with the smallest number of descendants (breaking ties arbitrarily).
    Let $v_{j_1},\dots, v_{j_k}$ be the children of $v_1$, such that $j_1<\dots<j_k$. For $i\in\{1,\dots,k\}$, we denote by $T_i$ the subtree of $T$ rooted at $v_{j_i}$, and let $n_i=|V(T_i)|$. In particular, we have $n_1\leq \dots\leq n_k$.

    Let us describe the drawing $\Gamma(T)$.
    Let $\varepsilon\in \left[0, \frac1n\right]$ be a constant depending on $n$. We place the vertex $v_i$ at the point $(3^i, \varepsilon \cdot i)$. See \cref{fig:general-trees}.
    It is easy to see that $\Gamma$ is a plane straight-line drawing of $T$. 
    We use $\Gamma_i$ to denote the subdrawing of $T_i$ in $\Gamma(T)$. 

\begin{figure}[h]
    \centering
    \includegraphics[scale=.75]{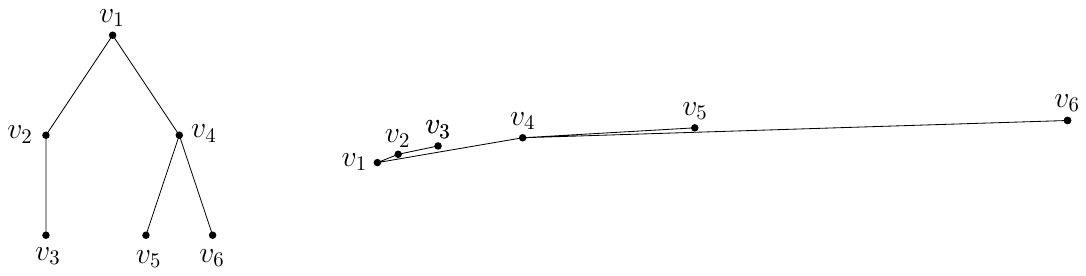}
    \caption{An example of a tree and its embedding; for readability purposes, the base of the exponent used for the $x$-coordinates is smaller than 3.}
    \label{fig:general-trees}
\end{figure}

     We now determine the bad pairs in this drawing. It is easy to see that $v_1$ does not form a bad pair with any vertex. 
    Next we consider the number of bad pairs inside $\Gamma_i$, for $i\in \{1,\dots,k\}$. 
    By induction, the drawing $\Gamma(T_i)$ has at most $c n_i\log n_i $ bad pairs.
    We can argue that $\bpn(\Gamma(T_i))=\bpn(\Gamma_i)$ (informally, the $x$ coordinates in these drawings are the same up to scaling by a factor $3^{j_i}$, and the $y$ coordinates are negligibly small).

    Next, we determine the number of bad pairs between different subtrees $T_i,T_j$ where $i \neq j$.
    Let $v_a$ and $v_b$ ($a<b$) be vertices that belong to subtrees $T_{a'}$ and $T_{b'}$ for some $a'\neq b'$, $a',b'\in \{1,\dots, k\}$. Recall that given two points, $p_1=(x_1,y_1)$ and $p_2=(x_2,y_2)$, we have $d(p_1,p_2)\leq |x_1-x_2|+|y_1-y_2|$, $d(p_1,p_2)\geq |x_1-x_2|$ and $d(p_1,p_2)\geq |y_1-y_2|$.
    Thus we have $d(v_a, v_b)\geq  3^b-3^a>3^a$. 
    The unique path in $T$ between $v_a$ and $v_b$ goes through $v_1$. On the path from $v_a$ to $v_1$, all vertices have smaller index than $a$, and thus all edges are shorter than $3^a$; in particular, they are shorter than $d(v_a,v_b)$. Let us now look at the path from $v_1$ to $v_b$. We distinguish two cases:
    \begin{itemize}
        \item \textbf{Case 1}: If $b'=b$, the edge $v_1v_b$ has length $3^b>d(v_a,v_b)$, so $v_a$ and $v_b$ form a bad pair. See, for example, $v_3$ and $v_4$ in \cref{fig:general-trees}.
        \item \textbf{Case 2}: If $b'\neq b$, by the definition of DFS preorder, we have $b'<b$. 
        The edge $v_1 v_{b'}$ has length at most $ 3^{b'}+\varepsilon b'<3^{b'}+1\leq 3^b-3^a\leq d(v_a,v_b)$.
        Let $v_cv_d$ be an edge on the path from $v_{b'}$ to $v_b$, with $c<d$. Note that $a<b'\leq c<d\leq b$, so we have $d(v_c,v_d)\leq 3^d-3^c+1\leq 3^b-3^a$. Thus, $v_a$ and $v_b$ do not form a bad pair. See, for example, $v_3$ and $v_5$ in \cref{fig:general-trees}.
    \end{itemize}
   
    In summary, $v_a$ and $v_b$ form a bad pair only if $v_b$ is a child of $v_1$ that comes after $v_a$ in the DFS ordering. 
    The total number of such pairs is at most $(k-1)n_1+(k-2)n_2+\dots+n_{k-1}$. In consequence, 
    \[
    \bpn(\Gamma(T))\leq \sum_{i=1}^k \bpn(\Gamma_i)+\sum_{i=1}^k (k-i)n_i \leq \sum_{i=1}^k cn_i\log n_i+\sum_{i=1}^k (k-i)n_i\text{.}
    \]

    It is sufficient to show that the right hand side is bounded by $cn \log n$.
    
    Let $m = n_1 + n_2 + \cdots + n_{k-1} = n-1- n_k$. Since the $n_i$ are sorted, we have $n_k \geq \frac{n-1}{k}$ and so $m \leq \frac{(n-1)(k-1)}{k} \leq \frac{n(k-1)}{k}$.
    Thus, $\sum_{i=1}^k (k-i)n_i\leq(k-1)m$, and moreover $(k-1)m\le cm\log\bigl(\frac{k}{k-1}\bigr)\leq cm\log\bigl(\frac{n}{m}\bigr)$, because $c \geq \frac{k-1}{\log\left(\frac{k}{k-1}\right)}$ for every $1<k \leq \Delta$. Therefore, we can write:

    \begin{align*}
    \sum_{i=1}^k cn_i\log n_i+\sum_{i=1}^k (k-i)n_i &= cn_k\log n_k +\sum_{i=1}^{k-1} (cn_i\log n_i) +  \sum_{i=1}^k (k-i)n_i\\
    &< cn_k\log n +\sum_{i=1}^{k-1} (cn_i\log m) + \sum_{i=1}^k (k-i)n_i\\
    &= cn_k\log n + cm\log m +\sum_{i=1}^k (k-i)n_i\\
    &\leq cn_k\log n + cm\log m + cm\log \left(\frac{n}{m}\right)\\
    &=cn_k\log n+cm \log n < cn\log n\text{.} \qedhere
    \end{align*}

\end{proof}

Using the above approach, we can get the following bound:
\begin{restatable}[\restateref{cor:boundeddepth}]{proposition}{corBoundeddepth}
\label{cor:boundeddepth}
Let $T$ be a rooted tree of height $h$ and maximum degree $\Delta$. Then, $\bpn(T)\le h\Delta n $. 
\end{restatable}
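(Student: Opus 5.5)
We reuse the drawing $\Gamma(T)$ constructed in the proof of \cref{thm:max-D-ary-tree}: root $T$ at the given root $v_1$, order the vertices by DFS preorder (the tie-breaking by subtree size is irrelevant here), and place $v_i$ at $(3^i,\varepsilon i)$. The case analysis in that proof does not depend on the bound being proved, only on the drawing. It shows the following: for a vertex $v$ with children subtrees $T_1,\dots,T_k$ (in DFS order), a pair $v_a\in T_{a'}$, $v_b\in T_{b'}$ with $a<b$ and $a'\neq b'$ is bad only if $v_b$ is itself a child of $v$. The argument applies verbatim at every internal vertex $v$, not just at the root.

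We also check that $v$ forms no bad pair with any vertex of its own subtree. The paper notes this for $v_1$, and the same reasoning applies at any vertex. Take a descendant $v_b$ of $v=v_r$ with $r<b$. The path from $v_r$ to $v_b$ uses only indices in $[r,b]$, so every edge on it has length at most $3^b-3^r+1$, which does not exceed $d(v_r,v_b)$ up to the negligible $\varepsilon$ terms. The inequality should be handled exactly as in Case 2 of that proof.

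We then charge bad pairs. Every pair of vertices $\{x,y\}$ has a unique lowest common ancestor $w$. If $w\in\{x,y\}$, the pair is good by the previous paragraph. Otherwise $x$ and $y$ lie in different child subtrees of $w$. If the pair is bad, the later of the two (in preorder) must be a child of $w$, so we charge the pair to that child. A child $c$ of $w$ is therefore charged at most once for each vertex lying in an earlier sibling subtree of $w$.

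We count the pairs charged through a fixed vertex $x$, which is the earlier endpoint. Such a pair is determined by an ancestor $w$ of $x$ (with $x\ne w$) together with a later child of $w$. Since $x$ has at most $h$ proper ancestors and each has at most $\Delta$ children, $x$ lies in at most $h\Delta$ bad pairs as the earlier endpoint. Summing over all $x$ gives $\bpn(T)\le\bpn(\Gamma(T))\le h\Delta n$.

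The main obstacle is making sure the local case analysis from \cref{thm:max-D-ary-tree} transfers to an arbitrary lowest common ancestor $w=v_r$ in place of $v_1$. On the path from $v_a$ up to $w$, all indices lie in $[r,a]$, so those edges are shorter than $3^a+1\le d(v_a,v_b)$. On the path from $w$ down to $v_b$, the edge $wv_{b'}$ and the deeper edges have length at most $3^{b'}-3^r+1$ and $3^d-3^c+1$ respectively. These are at most $3^b-3^a$ whenever $b'<b$, which is the same $\varepsilon$ bookkeeping as before. If that bound turns out too tight in some edge case, we would enlarge the base $3$ to $4$, which only adds slack.
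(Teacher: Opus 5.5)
Your proof is correct, but it is organized differently from the paper's. The paper argues by induction on $h$. It reuses the recursive inequality $\bpn(\Gamma(T))\le\sum_i\bpn(\Gamma_i)+\sum_i(k-i)n_i$ from the proof of \cref{thm:max-D-ary-tree}, bounds $k-i\le\Delta$ so the cross term is at most $\Delta n$, and applies the hypothesis to the subtrees of height at most $h-1$. This needs the case analysis only at the root, but it implicitly relies on the informally justified fact that the subdrawing $\Gamma_i$ has the same bad pairs as the standalone drawing $\Gamma(T_i)$.

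You instead work directly in the global drawing. You redo the case analysis at an arbitrary lowest common ancestor $v_r$; your bookkeeping there is right, since $a<b'\le c<d\le b$ leaves enough slack. You add the ancestor--descendant check, and then charge each bad pair to its earlier endpoint $x$, a proper ancestor of $x$, and a child of that ancestor, giving at most $h\Delta$ pairs per vertex. Unrolling the paper's induction gives exactly your count $\Delta\sum_x \mathrm{depth}(x)\le h\Delta n$, so the two proofs have the same strength. The paper's is shorter because it inherits everything from the theorem; yours avoids the self-similarity step and describes all bad pairs of $\Gamma(T)$ explicitly.

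One point to tighten: in the ancestor--descendant check, the bound $3^b-3^r+1$ alone does not beat $d(v_r,v_b)\ge 3^b-3^r$. Note that the edge $v_rv_b$ itself is irrelevant, since an adjacent pair is never bad. Every other path edge $v_cv_d$ has $c>r$ or $d<b$, which gives slack of at least $2\cdot 3^r$ or $2\cdot 3^{b-1}$ and absorbs the $+1$. No change of base is needed.
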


Next, we use \cref{thm:max-D-ary-tree} and, more particularly, \cref{cor:boundeddepth} to give a better bound for $\bpn(T)$ in the special case where $T$ is a $k$-caterpillar. A \emph{$k$-caterpillar} is a tree containing a central path $P$ (\emph{spine} of $T$) such that every vertex not in $P$ is within distance $k$ from some vertex on $P$. In particular, $1$-caterpillar corresponds to the usual notion of a caterpillar, and a $2$-caterpillar is frequently called a \emph{lobster}. 

\begin{proposition}\label{prop:hcaterpillars} 
Let $T$ be a $k$-caterpillar of maximum degree $\Delta$. Then, $\bpn(T) \le k\Delta n$. 
\end{proposition}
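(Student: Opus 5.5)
We would prove \cref{prop:hcaterpillars} by reusing the construction from the proof of \cref{thm:max-D-ary-tree}, but choosing the root and the DFS order so that the spine is handled specially. Root $T$ at the first spine vertex $p_1$ of the spine $P=p_1,\dots,p_m$. Order the vertices by a DFS preorder in which, at every spine vertex $p_i$, all off-spine children (together with their whole subtrees, each of height at most $k$) are visited before the spine child $p_{i+1}$. Place $v_i$ at $(3^i,\varepsilon i)$ as before. This drawing is planar by the same argument as in \cref{thm:max-D-ary-tree}.

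To count bad pairs, we reuse the case analysis of that proof. For $a<b$, let $w$ be the lowest common ancestor of $v_a$ and $v_b$. The pair $v_a,v_b$ can be bad only if $v_b$ is a child of $w$ and $v_a$ lies in an earlier subtree of $w$. All other pairs are good by exactly the Case~2 computation, which did not use the ordering by subtree sizes; it only used preorder and the exponential placement. The induction on height in \cref{thm:max-D-ary-tree} is really this charging: each bad pair $\{v_a,v_b\}$ is charged to $v_a$ together with an ancestor $w$ of $v_a$ and a later child of $w$.

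Fix a vertex $v_a$. For each proper ancestor $w$ of $v_a$, at most $\Delta$ children of $w$ come later than the subtree containing $v_a$. We split into two types of ancestors.

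\textbf{Off-spine ancestors.} Suppose $v_a$ lies in an off-spine subtree hanging from $p_i$. It has at most $k$ ancestors that are off the spine or equal to $p_i$, contributing at most $k\Delta$ bad pairs in total. At $p_i$ itself, the later children include $p_{i+1}$. We need $\Delta$ per level rather than $\Delta+1$; the constant is slightly loose, and careful counting (children minus one) absorbs it.

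\textbf{Spine ancestors $p_j$ with $j<i$.} Here $v_a$ lies in the subtree of the spine child $p_{j+1}$, which by our ordering is the last child of $p_j$ visited. So no later child of $p_j$ exists, and these ancestors contribute nothing. This is the point of the construction: the long chain of spine ancestors costs zero.

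Hence every vertex is the earlier endpoint of at most $k\Delta$ bad pairs, giving $\bpn(T)\le k\Delta n$. Spine vertices themselves are simpler: their only ancestors are spine vertices, which contribute nothing. The same argument, applied with only height bounding the ancestor count, gives \cref{cor:boundeddepth}.

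The main obstacle is making rigorous the claim that pairs within one subtree behave as in the inductive picture, namely that the only bad pairs are ``later child of the LCA'' pairs. I would verify this directly and non-inductively, reproducing the Case~1/Case~2 inequalities with $v_1$ replaced by the LCA $w$. On the path from $v_a$ up to $w$, all edges have length less than $3^a$. On the path from $w$ down to $v_b$, each edge $v_cv_d$ with $a<c<d\le b$, apart from the edge from $w$ itself, has length at most $3^d-3^c+1\le 3^b-3^a$. This requires $\varepsilon\le 1/n$, exactly as in the earlier proof.
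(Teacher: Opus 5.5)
Your proof is correct, and it takes a genuinely different route from the paper's.

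\textbf{The paper's route.} The paper draws the spine separately as a zigzag of very long edges of slope $\pm1$. Each hanging subtree $T_i$ (rooted at its spine vertex, height at most $k$) is drawn with the construction of \cref{thm:max-D-ary-tree} inside a thin wedge around the vertical ray at $v_i$. The bad pairs inside each $T_i$ are bounded by $k\Delta n_i$ via \cref{cor:boundeddepth}. A separate geometric argument (wedges small enough, spine edges longer than every subtree edge) then rules out bad pairs between different $T_i$.

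\textbf{Your route.} You keep one global drawing with coordinates $(3^i,\varepsilon i)$ and put all the work into the preorder, visiting the spine child last. You replace the induction by a direct lowest-common-ancestor characterization of bad pairs, followed by a per-vertex charging.

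\textbf{Comparison.} Your route avoids the wedge-separation geometry altogether. It also recovers \cref{cor:boundeddepth} as a special case instead of using it as a black box. It even gives a slightly sharper bound. Spine vertices are never the earlier endpoint of a bad pair. An off-spine vertex at depth $\delta\le k$ below its spine vertex is the earlier endpoint of at most $(\delta-1)(\Delta-2)+(\Delta-1)$ bad pairs. Hence $\bpn(T)\le k(\Delta-1)(n-|P|)$. The paper's route buys modularity: interactions between different hanging subtrees are excluded by geometry, independently of how each subtree is drawn.

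\textbf{Two small points to tidy.} First, your worry about needing ``$\Delta$ rather than $\Delta+1$'' per level is unfounded. An ancestor has at most $\Delta$ children, and at most $\Delta-1$ of them come after the child whose subtree contains $v_a$; when the ancestor is $p_i$, this count already includes $p_{i+1}$. So your stated bound of $\Delta$ per relevant ancestor holds as written.

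Second, state the ancestor--descendant case $w=v_a$ explicitly in the LCA analysis. It follows from the same inequalities. If the pair is non-adjacent, the first edge $v_av_{c'}$ on the path has $c'<b$, so its length is at most $3^{c'}-3^a+1<3^b-3^a\le d(v_a,v_b)$. The deeper edges are handled exactly as in your argument.
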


\begin{proof}
    Let $P=v_1,v_2,\dots, v_l$ be the spine of $T$. Note that, for each vertex $v_i$, $T$ contains a subtree $T_i$ of height at most $k$ rooted at $v_i$. Denote by $n_i$ the number of vertices in tree $T_i$. We construct a drawing $\Gamma$ of $T$ as follows. Firstly, we place $v_1$ at $(0,0)$. Then, for each $i\ge1$, we draw the edge $v_iv_{i+1}$ as a segment of length $3^{\Delta^{h+1}}$ and slope $-1$ if $i$ is odd, and $+1$ if $i$ is even. Then, at each vertex $v_i$, we choose a small angle $\varphi_i>0$ at $v_i$ such that $\varphi_i$ is bisected by the upwards ray originating at $v_i$ if $i$ is odd, and by the downwards ray originating at $v_i$ if $i$ is even. Now we complete our drawing $\Gamma$ of $T$ by drawing each $T_i$ as in \cref{thm:max-D-ary-tree} inside the section of the plane determined by $\varphi_i$.\footnote{Formally, we first draw $T_i$ and then apply a rotation to the drawing to place it inside the region.} See \cref{fig:lobsterdrawing}.

    \begin{figure}
        \centering
        \includegraphics{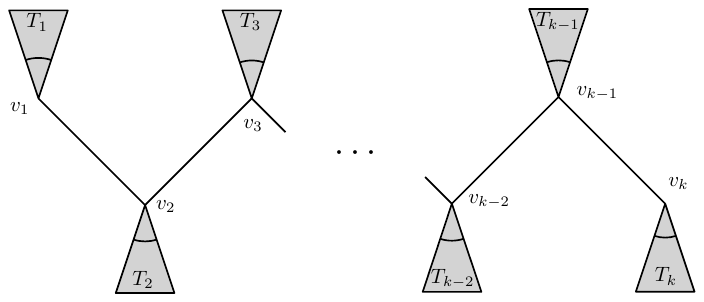}
        \caption{Drawing of a $k$-caterpillar $T$ described in the proof of \cref{prop:hcaterpillars}.} 
        \label{fig:lobsterdrawing}
    \end{figure}
    Note that each $T_i$ is a tree with at most $n_i$ vertices, and therefore it contributes at most $\bpn(T_i)\le k\Delta n_i$ bad pairs to $\bpn (\Gamma)$ by \cref{cor:boundeddepth}. Additionally, for any $i$, the longest edge in the drawing of $T_i$ is drawn with a length strictly less than $3^{\Delta^{k+1}}$. Thus, if we choose all angles $\varphi_i$ to be sufficiently small, there are no bad pairs formed by two vertices $v\in T_i$ and $u\in T_j$ such that $i\neq j$. That is,  $\bpn(\Gamma) \le \sum_{i=1}^l k\Delta n_i = k \Delta n$, finishing the proof.
\end{proof}

\section{Conclusion and Open Problems}\label{se:conclusion}

In this paper we initiated the study of computing planar drawings of graphs for which the number of violations of a given proximity rule is as small as possible. The research is motivated by the observation that only restricted families of graphs admit a proximity drawing. As a notable case of study we choose trees and their realizations as Euclidean minimum spanning trees of their vertex set, which we call EMST-drawings in the paper. We showed that, in contrast with the NP-hardness result in~\cite{EadesW1996algo}, deciding when a caterpillar is EMST-drawable is linear-time solvable. Also, we can compute in linear time a planar drawing of a degree-six caterpillar with the minimum number of bad pairs. For general trees of degree larger than 6, we established upper bounds on their bad pair number and computed the exact bad pair number of stars. We conclude by listing three open problems that naturally stem from our work.

\begin{enumerate}
\item Study the area requirement of EMST-drawable caterpillars. The algorithm in the proof of \cref{thm:char-6-EMST} computes an EMST-drawing with exponential area. The proof by Angelini et al.~\cite{DBLP:journals/comgeo/AngeliniBCFKS14} about the area requirement of EMST-drawings of trees does not apply to caterpillars. 
\item How hard is it to decide whether a tree of bounded pathwidth is EMST-drawable?
\item Study the bad pair number of graphs also for other definitions of proximity drawings.  For example, a \emph{Gabriel bad pair} $u,v$ is either a pair of non-adjacent vertices such that the disk with $u$ and $v$ as antipodal points does not contain any other vertex or a pair of adjacent vertices for which this disk does contain some other vertex. We would like to compute planar drawings of trees with the minimum number of Gabriel bad pairs. We recall that the trees that admit a planar drawing without Gabriel bad pairs are a subset of those having maximum vertex degree four~\cite{DBLP:journals/algorithmica/BoseLL96}.
\end{enumerate}

\bibliographystyle{plainurl}
\bibliography{references} 

\clearpage

\appendix

\section{Omitted Parts From the Proof of \cref{thm:char-6-EMST}}\label{app:MST-caterpillars}

\claimFiveSixCater* \label{claim:56cater*}

\begin{claimproof}
Without loss of generality, we assume that $d_t=5$, as the other case is then directly implied. 
By \cref{cor:644grid}, any drawing of the caterpillar $T'= (6,4,\dots, 4)$ obtained by removing a leaf adjacent to $v_t$ is an embedding into a triangular grid.
Now, if $T$ was EMST-drawable, we would be able to add the leaf that we removed to the realization of $T'$ and obtain a realization of $T$. But, by \cref{lem:64grid}, this is impossible, since the leaves adjacent to $v_t$ need to coincide with the vertices of the regular hexagon centered at $v_t$ and all such vertices are already occupied by the vertices of $T'$. 
\end{claimproof}

\claimSixThreeSixCater* \label{cl:636cater*}

\begin{claimproof}
    We first prove the claim for the case $t=3$ (and thus $i=2)$.  Since $v_1$ and $v_3$ are vertices of degree $6$, by \cref{lem:6uniq} there is a unique way to draw $K_{1,6}$ as an EMST. Consequently, any EMST-drawing of the caterpillar $(6,2,6)$ consists of merging of two such drawings of $K_{1,6}$ as depicted in \cref{fig:636caterpillar}. In particular, if one of the two copies of $K_{1,6}$ was drawn with longer edges, or if the spine vertices were not collinear, this would result in a drawing that is not an EMST. 
    
    Suppose for a contradiction, that it is possible to add a leaf vertex $x$ adjacent to $v_2$ into the EMST-drawing. 
    Consider a tree $T'$ with an additional vertex $x'$ attached to $v_2$. Let us draw the edge connecting $v_2$ and $x'$ on the line defined through $v_2$ and $x$ and let it have the same length as edge  $v_2x$, as in \cref{fig:636caterpillar}. Since $x$ does not belong to any bad pair, neither does $x'$, and we have obtained a drawing of $T'=(6,4,6)$. Note that this drawing is not the triangular grid drawing, as $x$ and $x'$ would then coincide with leaves attached to $v_1$ and $v_3$, and thus contradicts \cref{cor:644grid}.

    Now assume that $t\ge 4$. Then, we have that $(d_1,d_2,\dots, d_{i-1}) = (6,4,\dots,4)$ and $(d_{i+1}, d_{i+2},\dots d_t) = (4,4,\dots,6)$, and in particular $v_i$ is a leaf of spinars $T_{1(i-1)}$ and $T_{(i+1)t}$. By \cref{cor:644grid}, both any EMST-drawing of $T_{1(i-1)}$ and $T_{(i+1)t}$ is an embedding into a triangular grid. Hence any drawing of the caterpillar $T'$ obtained from $T$ by removing the leaf adjacent to $v_i$ is obtained by gluing the drawings of $T_{1(i-1)}$ and $T_{(i+1)t}$ at $v_i$. Then, all the vertices of a regular hexagon centered at $v_i$ are occupied by vertices of the drawing of $T'$. Now, by similar arguments as before, it is impossible to complete the drawing of $T'$ to a drawing of $T$.
\end{claimproof}
    \begin{figure}[h]
        \centering
        \includegraphics[page=4]{figures/caterpillardrawings.pdf}
        \caption{Drawing of the caterpillar $(6,3,6)$, and its impossible extension to a drawing of $(6,4,6)$.}
        \label{fig:636caterpillar}
    \end{figure}

\claimCidrawing* \label{cl:S_idrawing*}
\begin{claimproof}
    We first consider the case $i\in \{1,\dots, l-1\}$, i.e  an "inner" subcaterpillar $S_i=(d_1,\dots,d_s)$.  Note that, since the original caterpillar $T$ satisfies the conditions of the theorem,  so does every $S_i$. Additionally, $d_1, d_s \in \{1,2,3\}$, as otherwise the corresponding $v_1$ or $v_s$ in $T$ would be of degree $5$ or $6$, which would require existence of a vertex of degree at most 3 between $v_1$ or $v_s$ and its spine neighbor in $F_i$ or $F_{i+1}$, respectively, because these are of degree 5.

    If there are no vertices of degree $6$ in $S_i$, then $S_i$ is a subgraph of a $(3,4,\dots,4,3)$ caterpillar for which the statement is clearly true; see \cref{fig:34443}. Hence, we assume that $S_i$ contains at least one vertex of degree $6$.
    \begin{figure}[h]
        \centering
        \includegraphics[page=9,scale=0.8]{figures/caterpillardrawings.pdf}
        \caption{Drawing of a $(3,4,\dots,4,3)$ caterpillar satisfying the conditions of \cref{cl:S_idrawing}.}
        \label{fig:34443}
    \end{figure}

    Let $j_{\min},j_{\max}$ be the minimum and maximum indices such that $d_{j_{\min}}=d_{j_{\max}} = 6$. Then there is at least one $j_1<j_{\min}$  and at least one $j_2 > j_{\max}$ such that $d_{j_1},d_{j_2}\in\{2,3\}$. 
    Now, we construct a caterpillar $S_i'$ by adding leaves to $S_i$ as long as we can, while still preserving the conditions described above. In particular, between every two consecutive vertices of degree $6$ on the spine of $S_i'$, there are either exactly two vertices of degree $3$ or exactly one vertex of degree $2$, and the remaining vertices have degree $4$. Similarly, the two segments of the spine coming before $v_{j_{\min}}$ and after $v_{j_{\max}}$ contain a single vertex of degree $3$ and all other vertices have degree $4$. It is easy to see that we can obtain a EMST-drawing of $S_i$ from a realization of $S_i'$ by removing leaves.

    Now we describe how to draw $S_i'$, see \cref{fig:T_idrawing}. 
    The first row in the figure shows how to draw the initial  and final segments of $S_i'$ (from $v_1$ and $v_{j_{\min}}$  and from $v_{j_{\max}}$ to $v_{s}$), note that we can add as many vertices of degree $4$ before or after $v_{j_1}$ to the drawing. The second row of the figure shows how to draw the part of $S_i'$ between two consecutive vertices of degree $6$, with two distinct cases based on whether between them there is a single vertex of degree $2$ or there are two vertices of degree $3$. 
    In the last row of the figure, we show how these drawings are combined together. 

    In the case where $i=0$ ($l$), the initial (final) segment of the spine may contain only vertices of degree $4$, so we first remove a leaf adjacent to $v_1$ ($v_s$), draw the remaining tree as described above, and add the removed leaf back so that the resulting drawing is still an embedding into the triangular grid.

\end{claimproof}

\claimFFFiveCater* \label{cl:555caterpillars*}
\begin{claimproof}
    We draw $T$ by using \cref{alg:monma-suri}, hence it is clear that we obtain a strong EMST-drawing of $T$. 
    First we draw $v_1$ so that the vertices adjacent to it form a regular pentagon. We can assume (up to rotation) that $v_2$ is the neighbor of $v_1$ with the largest $x$-coordinate and that the edge $v_1v_2$ is drawn horizontally. Now, let $i\in\{2,\dots,t\}$ be arbitrary and assume that we have drawn the spinar

    $T_{1i}$ according to \cref{alg:monma-suri}. Let $v_{i-1}=v_i^{0},v_i^1,v_i^2,v_i^3,v_{i}^4$ be the neighbors of $v_i$ in the current drawing, ordered clockwise as seen from $v_i$. If the edge $v_{i-1}v_i$ has positive slope, we set $v_{i+1} = v_i^4$ and if $v_{i-1}v_i$ has negative slope, we set $v_{i+1}= v_{i}^3$. Therefore the slope of the edge $v_{i}v_{i+1}$ is exactly the opposite from the slope of the edge $v_{i-1}v_i$, and since this holds for every $i\ge 2$, the spine is an $x$-monotone curve. 
\end{claimproof}

\section{Proof of \cref{prop:disjointsubcaterpillars}}

\propdisjointsubcaterpillars* \label{prop:disjointsubcaterpillars*}
\begin{proof}
Suppose $S_1=\widetilde{T}_{ih}$ and $S_2=\widetilde{T}_{kl}$ with $i\le h\le k\le l$.
\begin{enumerate}
    \item If $S_1$ and $S_2$ do not share spine vertices, we have $h<k$. In this case, $S_1'$ and $S_2'$ are either disjoint, or their intersection is $\{v_{i_h},v_{i_k}\}$  (the latter happening when $k=h+1$ and $i_k=i_h+1$, i.e., when $v_{i_k}$ and $v_{i_k}$ are consecutive vertices on the spine of $T$). The only potential bad pair that belongs to both $S_1'$ and $S_2'$ in $\Gamma$ is $v_{i_h}$ and $ v_{i_k}$, but these vertices are adjacent and thus not a bad pair.
    \item 
    If $S_1$ and $S_2$  share exactly one spine vertex, we have $h=k$. In this case, the common vertices of $S_1'$ and $S_2'$ are $v_{i_h-1},v_{i_h},v_{i_h+1}$ and the leaves adjacent to $v_{i_h}$. Among them, $v_{i_h}$ is not involved in any bad pair, and the only bad pair that does not involve a leaf adjacent to $v_{i_h}$ can be the pair $v_{i_h-1}$ and $v_{i_h+1}$.
\end{enumerate}
See \cref{fig:subcaterpillars} for illustrations of both cases.
\end{proof}

\section{Omitted Parts From the Proof of \cref{thm:bpn-caterpillarmax6}}\label{app:bpn-caterpillarmax6}

\claimCatmaxSixA* \label{cl:bpn-catmax6-1*}

\begin{claimproof}
    The assumptions imply that $T=(6,4,\dots,4,6)$ and both vertices of $\widetilde{T}$ are red. \cref{thm:char-6-EMST} implies that $\bpn(T)\ge 1$. Suppose $\Gamma$ is a drawing of $T$ with just one bad pair. If this bad pair involves any leaf of the caterpillar, delete such a leaf. The resulting drawing has no bad pairs, so we have an EMST-drawing of $(5,4,\dots,4,6)$ or of $(6,4,\dots,4,3,4,\dots,4,6)$, a contradiction with \cref{thm:char-6-EMST}. 
    If the bad pair does not involve any leaf, 
    i.e. it is formed by two spine vertices, $v_i$ and $v_{i'}$, consider the drawings of $T_{1(i'-1)}$ and $T_{(i+1)t}$. By \cref{lem:64grid}, both of these drawings are on a triangular grid, which leads to a contradiction as there occurs a pair of leaves drawn at the same point.  
    
\end{claimproof}

\claimCatmaxSixB* \label{cl:bpn-catmax6-2*}

\begin{claimproof}
    In this case, all three vertices of $\widetilde{T}$ are red. Since $T=(6,4,\dots,4,6,4,\dots,4,6)$ contains $(6,4,\dots,4,6)$ as an induced subgraph, we know from \cref{cl:bpn-catmax6-1} that $\bpn(T)\ge 2$. Suppose for contradiction, that $\bpn(T)= 2$. Consider a drawing $\Gamma$ with exactly two bad pairs.
    The spinar $T_{1,i_2}$ on $v_1,\dots,v_{i_2}$ is $(6,4,\dots,4,6)$, and hence both bad pairs of $T$ are bad pairs of $T_{1,i_2}$. By symmetry, they both appear in $T_{i_2,t}$. 
    Part \ref{i2} of \cref{prop:disjointsubcaterpillars} implies that both bad pairs appear in the neighborhood of $v_{i_2}$ and, since at most one of them involves only spine vertices, at least one bad pair involves a leaf adjacent to $v_{i_2}$.
   
    Delete this leaf. The resulting drawing of $(6,4,\dots,4,5,4,\dots,4,6)$ has at most one bad pair, but according to \cref{cl:bpn-catmax6-1}, its induced subgraph $(6,4,\dots,4,6)$ in contrary requires at least two bad pairs. 
    Hence, $\bpn(T)\ge 3$.
\end{claimproof}

\claimCatmaxSixC* \label{cl:bpn-catmax6-3*}
\begin{claimproof}
    In this case, all $s$ vertices of $\widetilde{T}$ are red. If $s$ is even, then the spine of $\widetilde{T}$ contains a matching of size $\frac{s}{2}$, and hence $T$ contains $\frac{s}{2}$ spinars of isomorphism type $(6,4,\dots,4,6)$. By \cref{prop:disjointsubcaterpillars}, part \ref{i1}, these spinars do not share bad pairs. By \cref{cl:bpn-catmax6-1}, each of them produces at least two bad pairs in any drawing, so in total we have at least $2\cdot \frac{s}{2}=s$ bad pairs in any drawing of $T$.
    If $s\ge 3$ is odd, the spine of $\widetilde{T}$ can be partitioned into a path of length 2 and a matching of size $\frac{s-3}{2}$. Hence $T$ consists of a spinar $(6,4,\dots,4,6,4,\dots,4,6)$ and $\frac{s-3}{2}$ spinars of the isomorphism type  $(6,4,\dots,4,6)$ with disjoint spines.  By \cref{prop:disjointsubcaterpillars,cl:bpn-catmax6-1,cl:bpn-catmax6-2}, they produce at least $3+2\cdot\frac{s-3}{2}=s$ bad pairs in any drawing of $T$.
\end{claimproof}

\claimCatmaxSixD* \label{cl:bpn-catmax6-4*}
\begin{claimproof}
    In this case, $T=(5,4,4,\ldots,4,6)$ and \cref{thm:char-6-EMST} says that any drawing of $T$ has at least one bad pair.
\end{claimproof}

\claimCatmaxSixE* \label{cl:bpn-catmax6-5*}

\begin{claimproof}
    In this case, the two vertices of degree 6 of $\widetilde{T}$ are red. The isomorphism type of $T$ is  $(6,4,\dots,4,5,4,\dots,4,6)$, and thus $T$ contains a caterpillar of type $(6,4,\dots,4,6)$ as an induced subgraph. By \cref{cl:bpn-catmax6-2}, already this subgraph requires at least two bad pairs.
\end{claimproof}

\claimCatmaxSixF* \label{cl:bpn-catmax6-6*}

\begin{claimproof}
    Note that if $\widetilde{T}$ is a dark purple path, its edges are oriented in an alternating way (because vertices of degrees 5 and 6 alternate on this path). Let $T^*$ be the caterpillar obtained by deleting one leaf from every vertex of degree 5. Then $\widetilde{T^*}$ contains a red path with $\rv(T)$ red vertices, and hence, by~\cref{cl:bpn-catmax6-3}, every drawing of $T^*$ contains at least $\rv(T)$ bad pairs, and so does any drawing of $T$. 
\end{claimproof}

\claimCatmaxSixG* \label{cl:bpn-catmax6-7*}

\begin{claimproof}
    In this case, the spine of $\widetilde{T}$ contains a matching of size $\floor{\frac{h+1}{2}}$, each spinar of $T$ corresponding to a dark blue edge of this matching produces at least one bad pair by \cref{thm:char-6-EMST}, and all these bad pairs are distinct by \cref{prop:disjointsubcaterpillars}, part \ref{i1}.
\end{claimproof}

\section{Proof of \cref{obs:star_pentagonal}}

\obsStarPentagonal* \label{obs:star_pentagonal*}

\begin{proof}
    The proof goes by induction on $s$.
    For $s=1$, the statement is obvious.
    For $s\ge2$, we construct a pentagonal solution for $s$ from a pentagonal solution for $s-1$.
    We add the new vertex to the smallest group in the $5$-partition given by the pentagonal solution for $s-1$.
    The number of bad pairs in the resulting solution is:
    \[
        t'(s-1,5) + \floor{\frac{s-1}{5}} = t'(s,5)
        \text{.} \qedhere
    \]
\end{proof}

\section{Proof of \cref{cor:boundeddepth}}
\corBoundeddepth* \label{cor:boundeddepth*}
\begin{proof}
    We closely follow the approach used in the proof of~\cref{thm:max-D-ary-tree}, using the same construction for $\Gamma(T)$ and the same notation. We prove the claim by induction on $h$. For $h=1$, the claim is trivially true.

    Let $T$ be a tree of height $h$, and let $T_i$, $n_i$, $\Gamma(T)$ be as in the proof of~\cref{thm:max-D-ary-tree}. We have
    \begin{align*}
    \bpn(T)&\leq \sum_{i=1}^k \bpn(T_i)+\sum_{i=1}^k (k-i)n_i\leq \sum_{i=1}^k (h-1)\Delta n_i+\sum \Delta n_i\\
    &\leq (h-1)\Delta n+\Delta n=h\Delta n\text{.}\qedhere
    \end{align*}
\end{proof}

\end{document}